\keywords{actor system, asynchronous process calculus, behavioural equivalence, barbed bisimulation, fail-stop, failure injection, grey failure, recovery}
\lstdefinestyle{erl}{basicstyle=\small\ttfamily,
  keywordstyle=\color{purple}\textbf,
  commentstyle=\color{green}\textit,
  stringstyle=\color{red},
  language=erlang
}
\begin{document}

\title{A model of actors and grey failures}
\thanks{This work
    has been partially supported by EPSRC project EP/T014512/1 (STARDUST) and the BehAPI project funded by the EU H2020 RISE under the Marie Sklodowska-Curie action (No: 778233).}
\author[L.~Bocchi]{Laura Bocchi\lmcsorcid{0000-0002-7177-9395}}[a]
\author[J.~Lange]{Julien Lange\lmcsorcid{0000-0001-9697-1378}}[b]
\author[S.~Thompson]{Simon Thompson\lmcsorcid{0000-0002-2350-301X}}[a]
\author[L.~Voinea]{A. Laura Voinea\lmcsorcid{0000-0003-4482-205X}}[a,c]

\address{University of Kent, Canterbury, UK}
\email{l.bocchi@kent.ac.uk, s.j.thompson@kent.ac.uk}

\address{Royal Holloway, University of London, Egham, UK}
\email{julien.lange@rhul.ac.uk}

\address{University of Glasgow, Glasgow, UK}
\email{laura.voinea@glasgow.ac.uk}

\begin{abstract}
Existing models for the analysis of concurrent processes tend to focus on fail-stop failures, where processes are either working or permanently stopped, and their state (working/stopped) is known.
In fact, systems are often affected by grey failures: failures that are latent, possibly transient, and may affect the system in subtle ways that later lead to major issues, such as crashes, limited availability or overload. 

We introduce a model of actor-based systems with grey failures, based on two interlinked layers:
an actor model, given as an asynchronous process calculus with discrete
time, and a failure model that represents failure patterns that can be injected into the
system. Our failure model captures not only fail-stop node and link failures, but also grey failures, which might be partial or transient. 

We give a behavioural equivalence relation based on weak barbed bisimulation to compare systems on the basis of their ability to recover from failures, and on this basis we define some desirable properties of reliable systems. By doing so, we reduce the problem of checking reliability properties of systems to the problem of checking bisimulation.
\end{abstract}

\maketitle

\section{Introduction}

Many real-world computing systems are affected by non-negligible degrees of unpredictability, such as unexpected delays and failures, which are not straightforward to capture accurately.
Several works contribute towards a formal account of unpredictability, for example in the context of process calculi -- potentially including session types -- by extending calculi to model node failures~\cite{FournetGLMR96,ICALP-1997-RielyH}, link failures~\cite{DBLP:conf/forte/AdameitPN17}, and a combination of link and node failures~\cite{BergerH00}; these calculi also add a variety of program constructs to deal with failures including escapes~\cite{capecchi_giachino_yoshida_2016}, interrupts~\cite{10.1007/978-3-642-40787-1_8}, exceptions~\cite{DBLP:journals/pacmpl/FowlerLMD19}, and timeouts~\cite{DBLP:conf/fossacs/LaneveZ05,DBLP:conf/aplas/BergerY07,DBLP:conf/wsfm/LopezP11}.
Most existing models assume a fail-stop model of failure, where processes are either working or permanently stopped, and their state of being either working or stopped is known. In fact, systems are often affected by grey failures: failures that are latent, possibly transient, and may affect the system in subtle ways that later lead to major issues, such as crashes, limited availability and overload.  The symptoms of grey failure tend to be ambiguous. Several kinds of grey failure have been studied in the last decade such as transient failure (e.g., a component is down at periodic intervals), partial failure (only some sub-components are affected), or slowdown~\cite{slowdown}. In a distributed system, processes may have different perceptions as to the state of health of the system (aka \emph{differential observation})~\cite{achilles}. Grey failures tend to be behind many service incidents in cloud systems, and in these situations traditional fault tolerance mechanisms tend to be ineffective or counterproductive~\cite{achilles}. Diagnosis can be challenging and lengthy: for example, the work in~\cite{partial} estimates a median time for the diagnosis of partial failures to be 6 days and 5 hours. One of the main causes of late diagnosis is ambiguity of the symptoms and hence difficulty in correlating failures with their effects.

In this paper we make a first step towards a better understanding of the correlation between failures and symptoms via static formal analysis. We focus on the distributed actor model of Erlang~\cite{Armstrong13}, which is known for its effectiveness in handling failures and has been emulated in many other languages, e.g., the popular Akka framework for Scala~\cite{wyatt2013akka}.

We define a formal model of actor-based systems with grey failures, which we call \emph{`cursed systems'}. More precisely, we introduce two interlinked models: (1) a \emph{model of systems}, which are networks of distributed actors; (2) a \emph{model of (grey) failures} that allows us to characterise \emph{`curses'} as patterns of grey failures to inject in the system. This model of failures can represent node failures (with loss of messages in the node's mailbox), node slowdowns, link failures (with loss of the message in transit), and link slowdowns. The aforementioned instances of failure can be specified at the granularity of single nodes and links, to capture total and partial failures, and at the granularity of  (discrete) time instants, to capture permanent, transient, and periodic failures. For example, a failed node can be in a  failed state for a while before being restarted. The model of systems allows one to specify whether a node is restarted from the initial state (reset) or from a checkpoint. To capture the ambiguity of symptoms of grey failure we assume that actors have no knowledge of the state of health of other actors. However, actors can observe the presence (or absence) of messages in their own mailboxes and hence can infer the effects of failure from the communications that they have (not) received. In Erlang, a key mechanism for detecting and dealing with failure is the use of timeouts, which are one of the main ingredients of our system model.

Modelling failures as a separate layer allows us to compare systems recovery strategies with respect to specific failure patterns. This is a first step towards analysing the resilience of  systems to failures, and assessing the effect of failure on different parts of the system. We introduce a behavioural equivalence, based on weak barbed bisimulation, to compare systems affected by failures. We show that reliability properties of interest, namely resilience and recoverability, can be reduced to the problem of checking weak barbed bisimulation between  systems with failures.
Furthermore, we introduce a notion of augmentation, based on weak barbed bisimulation, to model and analyse the improvement of a system with respect to its recoverability against certain kinds of failure.

\paragraph{Synopsis} The paper is structured as follows. In Section~\ref{sec:overview}, we give an informal overview of the system model, and compare it with related work. Next we introduce the models of failure (Section~\ref{sec:failures}) and systems (Section~\ref{sec:calculus}). In Section~\ref{sec:properties} we give a behavioural equivalence between systems with failures, and  show how it is used to model properties of interest. Section~\ref{sec:app} describes prospective applications and promising directions of this work. Section~\ref{sec:related} discusses conclusions and related work.

\paragraph{Extensions with respect to the conference paper} This work is an extended version of the conference paper appeared in COORDINATION 2022~\cite{BLTV22} with the following additional contributions: 
\begin{itemize}
\item The syntax and semantics of systems have been extended to model checkpoints. We have added Section~\ref{reset} to show that the extension can still express the systems in~\cite{BLTV22} (Proposition~\ref{prop:reset}). In a new section, Section~\ref{sec:recoverability-cp}, we show that the notion of $n$-recoverability given in~\cite{BLTV22} is not suitable for systems with checkpoints. We have therefore added, in Section~\ref{sec:recoverability-cp}, a more relaxed notion of $n$-recoverability for checkpointing systems.
\item We found a flaw in the definition of $n$-recoverability given in~\cite{BLTV22}. We have fixed in this extended version. In Section~\ref{sec:recoverability-reset} we give an amended definition of $n$-recoverability alongside the one given in~\cite{BLTV22}, and discuss the differences by examples.
\item In~\cite{BLTV22} we informally stated a relationship between two reliability properties we defined in that work: `resilience is equivalent to $0$-recoverability'. In a new section of this extended version, Section~\ref{sec:res0rec}, we give the formal proof of this equivalence based on our amended definition of $n$-recoverability.
\item The examples in Section~\ref{sec:properties} have been improved to reflect the feedback from the presentation of the conference. For instance, Example~\ref{ex:resilience} has been framed to show the role of redundancy in fault-tolerance and how we can express it with our framework.
\item Following feedback at the conference presentation, we have extended the section of related works. In particular, we have added a comparison with the works in~\cite{10.1145/311531.311532} and~\cite{10.1007/s00165-017-0426-2}. Due to the particular relevance of the work in~\cite{10.1145/311531.311532}, we have also added a new subsection, Section~\ref{sec:fault}, with a more technical discussion on how our work relates to the more general definitions of fault-tolerance given in~\cite{10.1145/311531.311532}. 
\item We have added a Section~\ref{sec:app} with a discussion of prospective applications of this work.  
\end{itemize}

 \section{Informal overview}
\label{sec:informal}
\label{sec:overview}

Actor-based systems are modelled using a process calculus with three key elements, following the  actor model of Erlang: (1) time and timeouts, (2) asynchronous communication based on mailboxes with pattern-matching, and (3) actor nodes and injected failures.

\emph{Time and timeouts.} Timeouts are essential for an actor to decide when to trigger a recovery action. Time is also crucial to observe the effects of failure patterns including quantified delays or down-times of nodes and links.
We based our model of time on the Temporal Process Language (TPL)~\cite{TPL}, a well understood extension of CCS with discrete time and timeouts. Delays are processes of the form $\sleepk.P$ that behave as $P$ after one time unit. Timeouts are modelled after the idiomatic $\mathtt{receive ... after}$ pattern in Erlang.
Concretely, the Erlang pattern below (left) is modelled as the process below (right):

\begin{minipage}{.4\linewidth}
\begin{lstlisting}[style=erl]
receive
    Pattern1 -> P1;
    ...
    PatternN -> PN
after
    m -> Q
end
\end{lstlisting}
\end{minipage}
\begin{minipage}{.5\linewidth}
 \[?\{p_1.P_1, \ldots, p_N.P_N\}{\after{m} }\,Q\]
\end{minipage}

where $p_1,\ldots,p_N$ is a set of patterns, each associated with a continuation $P_i$, with $i \in \{ 1,\ldots,N\}$, and $Q$ is the timeout handler, executed if none of the patterns can be matched with a message in the mailbox within $m$ time units.
Following TPL, an action can be either a time action or an instantaneous communication action, and time actions can happen only when communication actions are not possible (maximal progress~\cite{TPL}). Concretely, we define the systems behaviour as a reduction relation with two kinds of actions: communication actions $\actarrow{}$ and time actions $\timearrow{}$. While TPL is synchronous and only prioritises synchronisations over delays, we model \emph{asynchronous} communications and prioritise any send or receive action over time actions. Thus, in our model, by maximal progress, communications have priority over delays.

The state of an actor at a time $t$ is modelled as $\node{n}{P}{\mb}{t}$, where $\nid n$ is the actor identifier (unique in the system), $\mb$ the mailbox, and $P$ the process run by that actor. System $\R_t$ below is the parallel composition of actors $\nid n_1$ and $\nid n_2$:
\[ \R_t=\node{n_1}{\sleepk.!\nid n_2\,\mathtt a. 0}{\emptyset}{t} \parallel \node{n_2}{?\mathtt{a}. P \after\, \color{red}1\color{black} ~  Q}{\emptyset}{t}\]
Although each actor in $\R_t$ has its own local time $t$ explicitly represented, which makes it easy to inject failures compositionally, our semantics keeps the time of parallel components synchronized (as in TPL). In $\R_t$, node $\nid n_1$ is deliberately idling and $\nid n_2$ is temporarily blocked on a receive/timeout action, so no communication can happen, and thus only a time action is possible, updating both actors' times and triggering the timeout in $\nid n_2$:
\[\R_t \timearrow{}
\node{n_1}{!\nid n_2\,\mathtt a. 0}{\emptyset}{t+1}\parallel\node{n_2}{Q}{\emptyset}{t+1} \]

\paragraph{Mailboxes.} Each pair of actors can communicate via two unidirectional links. For example, $(\nid n_1, \nid n_2)$ denotes the link for communications from $\nid n_1$ to $\nid n_2$.
An interaction involves three steps: (I) the sending actor sends the message by placing it in the appropriate link, (II) the message reaches the receiver's mailbox, and (III) the receiving actor processes the message. These three steps allows us to capture e.g., effects of failures in senders versus receivers, on nodes versus links, and to model latency. Consider the system \[\R_c =  \node{n_1}{!\mathtt{a}.0}{\emptyset}{t}\parallel\node{n_2}{?\mathtt{a}. P \after\, \color{red}2\color{black} ~  Q}{\mathtt{b}}{t}\]
Step (I), the sending of a message, is illustrated below on $\R_c$:
\begin{equation}\label{eqn:r}
	\R_c \actarrow{} \node{n_1}{0}{\emptyset}{t}\parallel \color{red}1\color{black}. (\nid n_1, \nid n_2, \mathtt a) \parallel\node{n_2}{ ?\mathtt{a}. P \after\, \color{red}2\color{black} ~  Q}{\emptyset}{t} =\R_c'
\end{equation}

\normalsize
$\color{red}1\color{black}.(\nid n_1, \nid n_2, \mathtt a)$ models a latent message in link $(\nid n_1, \nid n_2)$ with content $\mathtt a$. Prefix $1$ is the average network latency (assumed to be a constant). Due to latency, the message can only be added to the receiver's mailbox after one time step:
\begin{equation}\label{eqn:rc}
	\R_c' \timearrow{}\node{n_1}{0}{\emptyset}{t+1}\parallel (\nid n_1, \nid n_2, \mathtt a) \parallel\node{n_2}{ ?\mathtt{a}. P \after\, \color{red}1\color{black} ~  Q}{\emptyset}{t+1} 
\end{equation}
These floating messages $(\nid n_1, \nid n_2, \mathtt a)$ with no latency are similar to messages in the ether~\cite{SvenssonFE10}, in the global mailbox~\cite{LaneseNPV18}, or to the floating messages in~\cite{LaneseSZ19}.

Step (II) is the reception of the message, and happens as illustrated below (omitting the idle actor $\nid n_1$), where message $\mathtt a$ is added to the mailbox of $\nid n_2$:
\[\begin{array}{ll}
(\nid n_1, \nid n_2, \mathtt a) \parallel\node{n_2}{?\mathtt{a}. P \after\, \color{red}1\color{black} ~  Q}{\emptyset}{t+1} \actarrow{} \node{n_2}{?\mathtt{a}. P \after\, \color{red}1\color{black} ~  Q}{\mathtt a}{t+1}
\end{array}\]
Step (III) is the processing of the message, as illustrated below:
\[ \node{n_2}{?\mathtt{a}. P \after\, \color{red}1\color{black} ~  Q}{\mathtt a}{t+1} \actarrow{} \node{n_2}{P}{\emptyset}{t+1}\]
where message $\mathtt a$ in the mailbox matches the receive pattern (made up of a single atom $\mathtt{a}$) and is therefore processed.
Mailboxes give us an expressive model of communication for modern real-world systems. An alternative model of communication is  peer-to-peer communication, used e.g., in Communicating Finite State Machines (CFSM)~\cite{10.1145/322374.322380} and Multiparty Session Types~\cite{HondaYC16,DBLP:journals/mscs/CoppoDYP16}, where a receiver must specify from whom the message is expected. This makes it difficult to accurately capture interactions with public servers, or patterns like multiple producers-one consumer.

In the interaction above, note that $\nid n_2$ processes message $\mathtt a$ because it  matches pattern $\mathtt a$; this would be the case even if there were an older message $\mathtt b$ in the mailbox, if that message did not match that pattern $\mathtt a$. Alternative models, like Mailbox CFSMs~\cite{BasuBO12,BolligGFLLS21},
typically do not model the selective receive pattern (e.g., pattern-matching in Erlang) shown above. Without selective receive, participants can easily get stuck if messages are received out of order. One can encode peer-to-peer communication over FIFO unidirectional channels by using pattern matching with selective receive: using the sender's identifier in the message and in the receive pattern. A similar communication model to ours was proposed in~\cite{MostrousV11}.

\emph{Localities and failures.}
The actor construct is similar to that used to model locality for processes~\cite{Castellani01}, and also studied in relation to failures~\cite{BergerH00,RielyH01,DBLP:journals/jlp/FrancalanzaH07,DBLP:journals/iandc/FrancalanzaH08} but using a fail-stop untimed model. We use actor nodes to model the effects of injected failures on specific nodes and links.

Referring to system $\R_c'$ in~(\ref{eqn:r}), by placing floating messages into a link with latency before they reach the receiver's mailbox we can observe the effects of link failure as message loss. Assume link $(\nid n_1, \nid n_2)$ is down at time $t$:

\[\begin{array}{ll}
\R_c' \actarrow{}
 \node{n_1}{0}{\emptyset}{t}\parallel \node{n_2}{?\mathtt{a}. P \after\, \color{red}2\color{black} ~  Q}{\emptyset}{t}
\end{array}
\]
the floating message gets lost which in turn would end up causing a timeout in $\nid n_2$.
Similarly, in the case of node failure, node $\nid n_1$ in system $\R_c$, seen earlier in~(\ref{eqn:r}),  would go into a crashed node state before sending the message, hence triggering a timeout in $\nid n_2$:

\[\begin{array}{ll}
\R_c \actarrow{}
 \node{n_1}{\downarrow}{\emptyset}{t}\parallel \node{n_2}{?\mathtt{a}. P \after\, \color{red}2\color{black} ~  Q}{\emptyset}{t}
\end{array}
\]

\paragraph{Assumptions.} When a node crashes and comes back up again later on, it will come up with the same node identifier. This is consistent with Distributed Erlang, where by default  all nodes are named; on the other hand, if we were resuscitating processes, we would need to name them for this to be possible. For simplicity, we assume nodes are not created at run-time, focusing on fixed topologies. Extending the language with the capability of creating new nodes is relatively straightforward, and can be done in a similar way to $\pi$-calculus restriction.
We assume that behaviour within a node is sequential: actors can be composed in parallel but processes cannot, hence limiting communication to distributed communications between nodes. 

We choose to focus on inter-node communication on its own, because there already exist good strategies (e.g, in Erlang and Elixir) for dealing with in-node failure through the use of a supervision hierarchy, supervision strategies, and let-it-crash philosophy. Messages in transit when a node goes down remain in transit and may enter the mailbox after this node is resumed. 

We allow a restricted (external) version of choice, based on the communication patterns found in Erlang. Free, or completely unrestricted choice, while central to many process algebras, for example CCS, tends to be less used in practice.

\section{A model of failures}\label{sec:failures}

Let $\nodes$ be the set of node identifiers in a system. The model of failures is defined to be the $\Delta$ function:
\[\Delta: \nat \times (\nodes \cup \nodes \times \nodes) \mapsto \{ \csdown, \csup, \csslow\} \]
mapping each discrete time $t\in \nat$, node $\nid n \in  \nodes$, and link $(\nid n_1 , \nid n_2) \in \nodes \times \nodes$ to a value representing the state of health of that node or link, at that time. The symbol $\csup$ denotes the ``healthy'' state, 
$\csdown$ identifies the failure of a node or link, and $\csslow$ indicates a node or link slowdown. 

The failure scenarios covered by $\Delta$ include node crash, message loss, slow processes or slow networks. If \emph{node $\nid n$ is down} at time $t$, written $\curse{t}{\nid{n}} = \csdown$, then it will perform no action until it is resumed, if ever. If $\nid n$ is resumed at time $t'$, then its state at time $t'$ will be set to the initial state (see Definition~\ref{initial} for the formal definition). If \emph{link $(\nid n_1, \nid n_2)$ is down} at time $t$, written $\curse{t}{\nid{n}_1, \nid{n}_2} = \csdown$, then any message in transit on that link at time $t$ will be lost. If \emph{node $\nid n$ is slow} at time $t$, written $\curse{t}{\nid{n}} =\csslow$, then any actions of the process running in $\nid{n}$ are delayed for one time step, and may resume at time $t+1$ if $\curse{t+1}{\nid{n}} =\csup$.
If \emph{link $(\nid n_1, \nid n_2)$ is slow} at time $t$, written $\curse{t}{\nid{n}_1,\nid{n}_2} = \csslow$, then the delivery of any message in transit on that link at time $t$ will not happen at that time, and so will be delayed by at least one time unit. This delay is in addition to the network latency, which is modelled as a constant.
Failures can be permanent or transient, as shown below by examples.
\begin{exa}[Permanent and transient failures]
Permanent node failure after a certain point in time, say $t=10$, can be modelled by the definition $\Delta_1$ below. Function $\Delta_2$ shows a transient periodic structural failure of node $\nid{n}$, with each period having $100$ time units of healthy state and $100$ of down state. 
\[\ncurse{t}{\nid{n}}{1} = \begin{cases}\csup & \textit{if $t<10$}\\
\csdown & \textit{otherwise}\end{cases}
\qquad \qquad\quad
\ncurse{t}{\nid{n}}{2} = \begin{cases}\csup & \textit{if $t~ \mathtt{div} ~100~\mathtt{mod}~2~= 0$}\\
\csdown & \textit{otherwise}\end{cases}
\]
One could similarly model transient degrading failure by setting uptimes when $t=n^2$ for  $(n\in\nat)$.
\end{exa}

 \section{Calculus for cursed systems}\label{sec:calculus}
This section presents the model for actor based systems. The syntax of the calculus is given in~\Cref{fig:syntax}.

\begin{figure}[t]
	\begin{minipage}[t]{.45\linewidth}
		\[\arraycolsep=1.4pt\def\arraystretch{1}
		\begin{array}{lrll}
			\multicolumn{4}{l}{\text{Systems}} \\
			\R\!\!\!\!\! & \DEF & \nodecp{n}{P}{\mb}{\tval}{Q}        & \text{node}\\
			& \OR  & \PM{n_1}{n_2}{\msg}{\tval}               & \text{floating message} \\
			& \OR  & u.\PM{n_1}{n_2}{\msg}{\tval}             & \text{latent message} \\
			& \OR  & \nodecp{n}{\csdown}{\emptyset}{\tval}{Q}     & \text{crashed node}\\
			& \OR  & \emptyset                                & \text{empty}\\
			& \OR  & \R \pll \R                               & \text{parallel}\\
			\\\\
			\multicolumn{4}{l}{\text{Processes}} \\
			P & \DEF & \tsnd{\nid{n_i}}{m_i}{P_i}               & \text{send}\\
			& \OR  & \trcv{p_i.P_i}{\afterk {P}}              & \text{receive-timeout}\\
			& \OR  & \sleepk. P                               & \text{sleep}\\
			& \OR  & \savek. P                               & \text{check-point}\\
			& \OR  & \rec{t}P                                 & \text{fixed-point}\\
			& \OR  & \rvar{t}                                 & \text{recursive variable}\\
			& \OR  &\inact                                    & \text{inaction}\\
			\\
		\end{array}
		\]
	\end{minipage}
	\qquad \quad
	\begin{minipage}[t]{.45\linewidth}
		\[
		\begin{array}{lrll}
			\multicolumn{4}{l}{\text{Values}} \\
			\myval	& \DEF & a                                 & \text{atom}\\
			& \OR  & \nid{n}	                       & \text{node id}\\
			& \OR  & \myvar                            & \text{variable}
			\\
			\multicolumn{4}{l}{\text{Message}} \\
			\msg	& \DEF & \vec{V}			   & \text{message tuple}\\
			\\
			\multicolumn{4}{l}{\text{Mailbox}} \\
			\mb    & \DEF & \emptyset \OR \mb \cdot \msg      & \text{}\\
			\\
			\multicolumn{4}{l}{\text{Receive Patterns}} \\
			\ppar   & \DEF & \myvar  \OR a                     & \hspace*{-6pt}\text{pattern element}\\
			p       & \DEF & \vec \ppar                        & \hspace*{-6pt}\text{pattern tuple}\\
			\\
		\end{array}
		\]
	\end{minipage}
	\caption{Syntax}
	\label{fig:syntax}
\end{figure}

Systems are nodes $\nodecp{n}{P}{\mb}{\tval}{Q}$, messages (floating or latent), crashed nodes\linebreak[4]${\nodecp{n}{\csdown}{\emptyset}{\tval}{Q}}$, empty systems $\emptyset$, and parallel compositions of systems $\R \pll \R$. The term $\node{n}{P}{\mb}{\tval}$ denotes the state of node $\nid n \in \nodes$ at time $\color{red}t$ where $P$ is the process running in $\nid n$, $\color{gray} Q$ is the saved checkpoint process, and ${\color{blue}\mb}$ is the mailbox of $\nid n$. A {mailbox} is a (possibly empty) list of messages. A message $m$ is a tuple of values, which can be atoms $a$, node ids $\nid n$ or variables $X$. Messages are read from a mailbox via pattern matching. 
\begin{figure}[t]

\begin{align*}
	&\inferrule*[left=\VarA]{\match{\tvar}{a}{[a/\tvar]}}{}
&&
	\inferrule*[left=\VarN]{\match{\tvar}{\nid{n}}{[\nid{n}/\tvar]}}{}
\\[0.3cm]
	&\inferrule*[left=\Atom]{\match{a}{a}{\subst{a}{a}}}{}
&&
		\inferrule*[left=\Tuple]{
		\match{\ppar}{\myval}{\underline \sigma}
		\and
		\match{\vec \ppar}{\vec \myval}{\sigma}
	}{
		\match{\ppar \vec \ppar}{\myval \vec \myval}{\underline \sigma \sigma}
	}
	\\[0.3cm]
	&\inferrule*[left=\MboxH]{
		\match{\ppar}{\msg}{\sigma}
	}{
		\match{ \ppar}{\msg \cdot \mb}{ \sigma }
	}
	&&
	\inferrule*[left=\MboxT]{
		\nmatch{\ppar}{\msg}
		\and
		\match{ \ppar}{\mb}{\sigma}
	}{
		\match{ \ppar}{\msg \cdot \mb}{ \sigma}
	}
	\end{align*}
\caption{Matching rules}\label{match-fig}
\end{figure}

We define the pattern matching function in the style of \cite{MostrousV11} through the derivations in Figure \ref{match-fig}. Given a pattern $\vec \ppar$ and a message (tuple) $\vec \myval$, $\match{\vec \ppar}{\vec \myval}{\sigma}$ the match function returns a substitution $\sigma$. Note that the match is only defined if $\vec \ppar$ and $\vec \myval$ have the same size, and if the pattern and message match.
We write $\nmatch{\ppar}{\msg}$ when message $\msg$ does \emph{not} match pattern $\ppar$. 
Juxtaposition denotes concatenation of pattern and value tuples, and, since we assume that variables appear uniquely in pattern tuples, ${\underline \sigma} \sigma$ is the union of the two substitutions.

{A floating message} $\PM{{n_1}}{{n}_2}{\msg}{\tval}$ represents a message $\msg$ in link $(\nid{n_1},\nid{n_2})$. {Latent messages} $u.\PM{{n_1}}{{n}_2}{\msg}{\tval}$ are floating messages which can only reach the receiver's mailbox after a latency $u$. We assume all sent messages have a latency defined as a constant $L$, which abstracts the average network latency.

Looking at {processes}, a term of the form $\tsnd{\nid{n_i}}{m_i}{P_i}$ chooses to send to node $\nid{n_i}$ a message $m_i$ and continues as $P_i$. Term $\trcv{p_i.P_i}{\afterk {P}}$ tries to pattern match a message from the mailbox against one of the patterns $p_i$, and continues as $P_i$ given that the matching succeeds for $p_i$, timing out $\afterk$ one time unit if no message matches and executing $P$. Process $\sleepk.P$ consumes a time unit and then continues as $P$. Process $\savek.P$ saves the current state as a checkpoint process. Process $\mu \mathtt t . P$ is for recursion, and $\rvar{t}$ is the recursive call. Finally, $\inact$ is the idle process.
\begin{rem} We use notation $\trcv{p_i.P_i}{\after{u} {P}}$ as syntactic sugar for nesting $u$ timeouts\footnote{As $Q(u)$ where $Q(0) = \trcv{p_i.P_i}{\afterk P}$ and $Q(i+1) = \trcv{p_i.P_i}{\afterk Q(i)}$.}
and $\sleep{u}.P$  for the sequential composition of $u$ delays with continuation $P$.
\end{rem}

Recall (Section~\ref{sec:failures}) that we fix the set of system's nodes $\nodes$, and the domain of $\Delta$ is $\nodes \cup (\nodes \times\nodes)$, that is the set of nodes and links between pairs of nodes. Our unit of analysis is a \emph{cursed system} defined below.
\begin{defi}[Cursed system]\label{def_cs}
A cursed system is a pair $(\R,\Delta)$ where $\R$ is a system, $\Delta$ is a curse.
\end{defi}

The semantics of cursed systems is given in Def.~\ref{def_sem} as a reduction relation over  systems that is parametric on $\Delta$.
We write $\R_1 \equiv \R_2$ to mean that the systems $\R_1$ and $\R_2$ are the same up-to associativity and commutativity of~$\pll$, plus $0.\PM{{n_1}}{{n}_2}{\msg}{\tval}\equiv \PM{{n_1}}{{n}_2}{\msg}{\tval}$ and $\R \parallel \emptyset \equiv \R$.

\begin{defi}[Operational semantics for cursed systems]\label{def_sem}
Reduction is the smallest relation on cursed systems over communication actions denoted by $\actarrow{}$, and time actions denoted by $\timearrow{}$, that satisfies the rules in Figure~\ref{fig:reduction}. We use $\reduces{}$ when $\reduces{} \in \{\actarrow{}, \timearrow{}\}$.
For readability, in the rules we assume $\Delta$ fixed and write $\R \reduces{} \R'$ instead of $(\R, \Delta)\reduces{}(\R', \Delta)$.
\end{defi}

\begin{figure}\label{semrules}
	\begin{subfigure}[H]{\textwidth}
		\begin{align*}
& \inferrule*[left=\Snd]{
				\downf{\tk}{\nid{n}} = \csup
				\and 
				j \in I
			}{
				 \nodecp{n}{ \tsnd{\nid{n_i}}{m_i}{P_i}}{\mb}{t}{Q}
				\actarrow{}
				 \nodecp{n}{P_j}{\mb}{t}{Q} \pll L.\PM{n}{n_j}{\msg_j}{t}
			}
			\\[0.2cm]
&	\inferrule*[left=\Sched]{
				\downf{\tk}{\nid n_1} = \csup
				\and
				\downf{\tk}{\nid{n}_2, \nid{n}_1}  =  \csup
			}{
				\PM{n_2}{n_1}{\msg}{t} \pll \, \nodecp{n_1}{P}{\mb}{\tval}{Q}
				\actarrow{}
				\nodecp{n_1}{P}{\mb \cdot \msg}{\tval}{Q}
			}
			\\[0.2cm]
&	\inferrule*[left=\Rcv]{
				\downf{\tk}{\nid{n}} = \csup
				\and
				j\in I, \, \match{p_j}{\msg}{\sigma} \and
				\forall i\in I, \, \nmatch{p_i}{\mb_1}
			}{
				\nodecp{n}{\trcv{p_i.P_i}{\afterk P}}{\mb_1 \cdot \msg \cdot \mb_2 }{\tval}{Q}
				\actarrow{}
				\nodecp{n}{P_j \sigma}{\mb_1 \cdot \mb_2}{\tval}{Q}
			}
			\\[0.2cm]
&	\inferrule*[left=\Checkpoint]{
					-		}{
				\nodecp{n}{\savek.P}{\mb}{t}{Q}
				\actarrow{}
\nodecp{n}{P}{\mb}{t}{P}
			}\\[0.2cm]
&	\inferrule*[left=\Rec]{
				\downf{\tk}{\nid{n}} = \csup
				\and
				\nodecp{n}{P\subst{\rec{t}P}{\rvar{t}}}{\mb}{t}{Q}
				\reduces{}
\nodecp{n}{P'}{\mb}{t'}{Q}
			}{
				\nodecp{n}{\rec{t}P}{\mb}{t}{Q}
				\reduces{}
\nodecp{n}{P'}{\mb}{t'}{Q}
			}
		\end{align*}
		\caption{Actor/Node actions}\label{subfig:comm_actions}
		\noindent\rule{\textwidth}{0.5pt}
	\end{subfigure}
	\begin{subfigure}[H]{\textwidth}
		\begin{align*}
&	\inferrule*[left=\Sleep]{
				\downf{\tk}{\nid n} = \csup
			}{
				\nodecp{n}{\sleepk.P}{\mb}{\tval}{Q} \timearrow{} \nodecp{n}{P}{\mb}{\tval+1}{Q}
			}
			\\[0.2cm]
&	\inferrule*[left=\Latency]{
				\downf{\tk}{\nid{n_1}, \nid{n_2}} = \csup
				\and
				u > 0
			}{
				u.\PM{n_1}{n_2}{\msg}{\tval} \timearrow (u-1).\PM{n_1}{n_2}{\msg}{\tval+1}
			}
			\\[0.2cm]
&	\inferrule*[left=\Timeout]{
				\downf{\tk}{\nid{n}} = \csup \and \forall i \in I, \, \nmatch{p_i}{\mb}
			}{
				\nodecp{n}{\trcv{p_i.P_i}{\afterk P}}{\mb}{t}{Q} \timearrow
				\nodecp{n}{P}{\mb}{t+1}{Q}
			}
\end{align*}
		\caption{Time actions} \label{subfig:time_actions}
		\noindent\rule{\textwidth}{0.5pt}
	\end{subfigure}
	\begin{subfigure}[H]{\textwidth}
		\begin{align*}
			&    \inferrule*[left=\NLate]{
				\downf{\tk}{\nid{n}} = \csslow
			}{
				\nodecp{n}{P}{\mb}{\tval}{Q} \timearrow\nodecp{n}{P}{\mb}{\tval + 1}{Q}
			}
			&
\inferrule*[left=\MsgLoss]{
				\downf{\tk}{\nid{n_1}, \nid{n_2}} = \csdown
				\and
				u \geq 0
			}{
				u.\PM{n_1}{n_2}{\msg}{\tval} \actarrow{} \emptyset
			}
			\\[0.2cm]
&      \inferrule*[left=\MsgLate]{
				\downf{\tk}{\nid{n_1}, \nid{n_2}} = \csslow
				\and
				u \geq 0
			}{
				u.\PM{n_1}{n_2}{\msg}{\tval} \timearrow u.\PM{n_1}{n_2}{\msg}{\tval + 1}
			} 
			&
\inferrule*[left=\Down]{
				\downf{\tk}{\nid{n}}  = \csdown
			}{
				\nodecp{n}{P}{\mb}{\tval}{Q}  \actarrow{} \nodecp{n}{\csdown}{\emptyset}{\tval}{Q}
			}
			\\[0.2cm]
			&   \inferrule*[left=\NDLate]{
				\downf{\tk}{\nid{n}}  = \csdown
			}{
				\nodecp{n}{\csdown}{\emptyset}{\tval}{Q}  \timearrow{} \nodecp{n}{\csdown}{\emptyset}{\tval + 1}{Q}
			}
			&
\inferrule*[left=\Up]{
				\downf{\tk}{\nid{n}} = \csup 
			}{
				\nodecp{n}{\csdown}{\emptyset}{\tval}{Q}  \actarrow{}  \nodecp{n}{Q}{\emptyset}{\tval}{Q}
			}
		\end{align*}
		
		\caption{Failure actions}\label{subfig:failure_actions}
		\noindent\rule{\textwidth}{0.5pt}
	\end{subfigure}
	\begin{subfigure}[H]{\textwidth}
		\begin{align*}
			& \inferrule*[left=\Str]{
				\R_1 \equiv \R'_1 \and \R_1
				\reduces{}
				\R_2	\and \R_2 \equiv \R'_2
			}{
				\R'_1
				\reduces{}
				\R'_2
			}
			&&  \inferrule*[left=\ParCom]{
				\R_1
				\actarrow{}
				\R_1'
			}{
				\R_1 \pll \R_2
				\actarrow{}
				\R_1' \pll \R_2
			}
			\\[0.2cm]
&  \inferrule*[left=\ParTime]{
				\R_1
				\timearrow \R_1'
				\quad
				\R_2
				\timearrow \R_2'
				\quad
				\R_1\pll \R_2 \ncursearrow{}
			}{
				\R_1 \pll \R_2
				\timearrow
				\R_1' \pll \R_2'
			}
		\end{align*}
		\caption{System actions}\label{subfig:sys_actions}
		\rule{\textwidth}{0.5pt}
	\end{subfigure}
	
	\caption{Reduction rules}
	\label{fig:reduction}
\end{figure}

The first set of rules in~\cref{subfig:comm_actions} is for actors actions, happening at a time $\tval$, when the nodes and links are in a healthy state i.e. $\downf{t}{\nid n}=\csup$.
In {rule $\Snd$}, $\nid n$ chooses to send a message $m_j$ to node $\nid n_j$, and continues as $P_j$. Modelling asynchronous communication, a latent message $L.\PM{n}{n_j}{\msg_j}{t}$ is introduced in the system, where $L$ is the network latency constant.
{Rule $\Sched$} delivers a floating message to the receiver's mailbox. {Rule $\Rcv$}, retrieves the first message $\msg$ in the mailbox that matches one of the receive patterns $p_j$. The match function returns a substitution $\sigma$ that is applied to the continuation process $P_j$ associated with pattern $p_j$; and $\msg$ is removed from the mailbox. Rule $\Checkpoint$ saves the current state $P$ as a checkpoint process for that node $\nid n$. Finally, {Rule $\Rec$} allows a node with a recursive process to proceed with a communication or a time action.

\paragraph{Time actions.}
The second set of rules, in~\cref{subfig:time_actions}, is for time-passing reduction in absence of failures. {Rules $\Sleep$ and $\Timeout$} model reduction of time consuming and receiving with timeout processes, respectively. {Rule  $\Timeout$} can only be applied if none of the messages in the mailbox is matching any of the patterns $\{p_i\}_{i\in I}$ yielding an urgent receive semantics~\cite{MURGIA201938} reflecting the receive primitive in Erlang. {Rule $\Latency$} allows time passing for latent messages. Note that, by setting $u' = \mathtt{max}(u-1, 0)$, if a receiver node crashes, all latent/floating messages remain in the link until the node is able to receive them, i.e. in a healthy state. We omit the rules for state-preserving time passing for idle nodes and $\node{n}{\inact}{\mbox{M}}{\tval}$.

\paragraph{Failure actions.}
The third set of rules, in~\cref{subfig:failure_actions}, models the effects of failures injected at time $t$.
{Rule  $\NLate$} models a delay, injected by $\down(\tk)(\nid{n}) = \csslow$, in the execution of the process $P$ in a node $\nid n$: a time unit elapses without any action in $P$. {Rule $\MsgLoss$} models a lossy link at time $\tval$, injected by $\down(\tk)(\nid{n_1}, \nid{n_2}) = \csdown$, and permanently deletes a message $u.\PM{n_1}{n_2}{\msg}{\tval}$ in transit. {Rule $\MsgLate$} models a slow link, injected by $\down(\tk)(\nid{n_1}, \nid{n_2}) = \csslow$, by allowing time to pass but without decreasing the latency $u$ of the message. {Rule $\Down$} models an instantaneous node that crash injected by $\down(\tk)(\nid{n}) = \csdown$, and erases the process and mailbox of the node. {Rule $\NDLate$} allows time to pass for a crashed node. In rule $\Up$ a crashed node is restarted with its saved checkpoint process $Q$ and empty mailbox. $\pf$ is a mapping from $\nodes$ to processes, that gives the initial process of each actor node. We assume that the node identifier is unchanged when restarting the node.

\paragraph{Runtime System actions.}
The last set of rules given in~\cref{subfig:sys_actions} models system actions.
In {rule $\ParCom$} a communication action of system part $\R_1$ is reflected in the composite system $\R_1 \pll \R_2$.
In {rule $\ParTime$} time actions need to be reflected in all the parts of a system. A whole system can have a time action only if all parts of the system have no communication or failure actions to perform at the current time ($\R_i \nactarrow{}$). $\Str$ is for communication and time actions of structurally equivalent systems.

\subsection{Basic properties of systems reductions}
In the remainder of this section we discuss two properties of cursed systems: time-coherence (the semantics keeps clocks synchronized) and non-Zenoness. We start by defining the time of a system. All definitions below apply straightforwardly to cursed systems by fixing a $\Delta$.

\newcommand{\tw}{\underline{t}}
\newcommand{\stime}{\mathtt{time}}

\begin{defi}[Time of a system] Let $\tw$ range over $\nat \cup \{*\}$. We define the synchronization (partial) function $\coherent$:
\[ \coherent(*, \tw) =  \coherent(\tw, *) = \tw \quad \coherent(*, *) = * \quad  \coherent(\tw, \tw)=\tw  \]
$\coherent(\tw_1, \tw_2)$ returns a time or a wildcard $*$, and is undefined if $\tw_1\not =\tw_2$ and neither $\tw_1$ nor $\tw_2$ is a wildcard.
We define $\stime(\R)$ as a partial function over systems:
\[\stime(\R) = 	\begin{cases}
					* &  \R = \emptyset \\
					t  & \R = \nodecp{n}{P}{\mb}{t}{Q} \text{ or } \R = \nodecp{n}{\csdown}{\mb}{t}{Q} \text{ or }\\
					  & \R = \PM{n_1}{n_2}{m}{t}\text{ or } \R = u.\PM{n_1}{n_2}{m}{t}\\
					\coherent(\stime(\R_1), \stime(\R_2 ))& \R = \R_1 \pll \R_2
					\end{cases}
		\]
	\end{defi}

 We can now define time-coherence of a system, holding when all its components have the same time.
\begin{defi}[Time coherence]
$\R$ is \emph{time coherent} if $\stime(\R)$ is defined.
\end{defi}
For example, system $\nodecp{n_1}{P}{\mb}{t}{Q_1}\parallel\PM{n_1}{n_2}{\msg}{t}\parallel \emptyset$ is time-coherent, while system $\nodecp{n_1}{P}{\mb}{t}{Q_1}\parallel \PM{n_1}{n_2}{\msg}{t+1}\parallel \emptyset$ is not.

The time function is also useful to characterise systems where all actors are coherently at time $0$ and in their initial state.
\begin{defi}[Initial system]\label{initial}
Let $\pf$ and $\cpf$ be mappings from $\nodes$ to processes such that $\pf(\nid n)$ is the initial process of $\nid n$ and $\cpf(\nid n)$ is the initial checkpoint of $\nid n$. Note that by the definition of processes $\downarrow$ is not a process, and so nodes are never crashed in the initial state. A system $\R$ is initial if $\stime(\R) = 0$ and
	\[\R \equiv \nodecp{n_1}{\pf(\nid n_1)}{\emptyset}{0}{\cpf(\nid n_1)} \pll \ldots \pll \nodecp{n_m}{\pf(\nid n_m)}{\emptyset}{0}{\cpf(\nid n_m)}\]
	with $\{1,\ldots,m\} = \nodes$.  A cursed system $(\R,\Delta)$ is \emph{initial} if $\R$ is initial.
\end{defi}

We assume any system $\R$ to start off as initial and hence, by Prop.~\ref{time-coherence}, to be time-coherent.

Next we show that the reduction over systems preserves time-coherence, hence all reachable systems are coherent.

\begin{lem}[Time-coherence invariant]\label{time-invariant}
If $\R$ is time-coherent and $\R \reduces{} \R'$ then $\R'$ is time-coherent.
\end{lem}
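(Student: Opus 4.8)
The plan is to proceed by induction on the derivation of the reduction $\R \reduces{} \R'$, with a case analysis on the last rule applied, strengthening the statement so that the parallel cases go through. The strengthening I would carry is: if $\stime(\R)$ is defined and $\R \actarrow{} \R'$, then $\stime(\R')$ is defined and equals either $\stime(\R)$ or $*$; and if $\stime(\R)$ is defined and $\R \timearrow{} \R'$, then $\stime(\R')$ is defined and equals $\stime(\R)+1$. The reason for the two clauses is structural: a communication step acts on a single component and must leave its clock unchanged (or delete the component, sending its time to $*$), whereas a time step advances every clock in lockstep. Recording this extra information is exactly what is needed to recombine a reduced sub-term with its parallel context, so the invariant of the lemma follows by taking the union of the two clauses.

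First I would dispatch the atomic (axiom) rules by direct computation of $\stime$. For the communication axioms $\Snd$, $\Sched$, $\Rcv$, $\Checkpoint$, $\Down$ and $\Up$, the active node (together with any latent message introduced, as in $\Snd$, which carries stamp $t$) retains the same time-stamp $t$ on both sides, so $\stime(\R') = \stime(\R)$; the only exception is $\MsgLoss$, where a message of time $t$ is replaced by $\emptyset$, so $\stime$ drops from $t$ to $*$ — precisely the second alternative of the strengthened communication clause. For the time axioms $\Sleep$, $\Latency$, $\Timeout$, $\NLate$, $\MsgLate$ and $\NDLate$, the single component advances its stamp from $t$ to $t+1$, giving $\stime(\R') = \stime(\R)+1$. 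Note that each base case starts from a system whose time is a genuine natural number: by the definition of $\stime$, a system with $\stime(\R) = *$ is built solely from $\emptyset$ and admits no reduction, so it never occurs as the source of an axiom.

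Next I would handle the structural rules. For $\ParCom$, from $\R_1 \actarrow{} \R_1'$ the induction hypothesis gives $\stime(\R_1') \in \{\stime(\R_1), *\}$; since $\stime(\R_1 \pll \R_2)$ is defined, $\coherent(\stime(\R_1),\stime(\R_2))$ is defined, forcing $\stime(\R_2)$ to be compatible with $\stime(\R_1)$, and because $*$ synchronises with any value, $\coherent(\stime(\R_1'),\stime(\R_2))$ remains defined, i.e.\ $\R_1' \pll \R_2$ is coherent. For $\ParTime$ the induction hypothesis gives $\stime(\R_i') = \stime(\R_i)+1$ for $i=1,2$; coherence of $\R_1 \pll \R_2$ makes the two source stamps equal (neither can be $*$, since $\emptyset$ has no time action), so the two incremented stamps are again equal and $\R_1' \pll \R_2'$ is coherent. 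The $\Rec$ rule is immediate from the induction hypothesis applied to its premise, as the source and target carry the same stamps $t$ and $t'$ in both the premise and the conclusion.

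The step that needs a small separate argument, and the place I would be most careful, is the $\Str$ rule. Here I would first establish the auxiliary fact that $\equiv$ preserves $\stime$, i.e.\ $\R_1 \equiv \R_2$ implies $\stime(\R_1) = \stime(\R_2)$ as partial functions (both undefined, or both defined and equal). This reduces to checking that $\coherent$ is commutative and associative, so that reordering and re-bracketing of $\pll$ leaves $\stime$ unchanged; that $\coherent(\cdot,*)$ is the identity, covering $\R \pll \emptyset \equiv \R$; and that the law $0.\PM{n_1}{n_2}{\msg}{\tval} \equiv \PM{n_1}{n_2}{\msg}{\tval}$ does not alter the message's stamp $t$. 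With this fact in hand, $\Str$ is routine: $\R_1' \equiv \R_1$ transports coherence (and the exact stamp) to $\R_1$, the induction hypothesis applies to $\R_1 \reduces{} \R_2$, and $\R_2 \equiv \R_2'$ transports the conclusion back to $\R_2'$, completing the induction.
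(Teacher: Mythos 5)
Your proof is correct and takes essentially the same route as the paper's: induction on the derivation, with the key observation that communication steps leave component clocks unchanged while $\ParTime$ is the only rule updating the time of a parallel composition and forces all components to advance in lockstep. The paper gives this only as a two-sentence sketch; your strengthened induction hypothesis (tracking whether $\stime$ is preserved, dropped to $*$, or incremented) and the auxiliary fact that $\equiv$ preserves $\stime$ are precisely the details that make that sketch rigorous.
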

The proof of the lemma is straightforward, by induction on the derivation. In fact, the only rule that updates the time of a parallel composition is $\ParTime$ which requires time passing for all parallel processes. The fact that if $\R$ is initial then $time(\R)$ is defined (as $0$) yields the following property. We let $\reducesStar{}$ be the transitive closure of the reduction relation.

\begin{prop}\label{time-coherence}
Let $\R$ be initial, if $\R\reducesStar{}\R'$ then $\R'$ is time-coherent.
\end{prop}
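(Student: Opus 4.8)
The plan is to prove Proposition~\ref{time-coherence} by a straightforward induction on the length of the reduction sequence $\R \reducesStar{} \R'$, using Lemma~\ref{time-invariant} (the time-coherence invariant) as the inductive step. The key observation is that we are given that $\R$ is initial, and by Definition~\ref{initial} an initial system satisfies $\stime(\R) = 0$, so in particular $\stime(\R)$ is defined, which is exactly the statement that $\R$ is time-coherent. This handles the base case and sets up the induction.

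For the inductive step, suppose $\R \reducesStar{} \R''$ where the sequence has length $k$, and $\R'' \reduces{} \R'$ is one further reduction step. By the induction hypothesis, $\R''$ is time-coherent. Then Lemma~\ref{time-invariant} applies directly: since $\R''$ is time-coherent and $\R'' \reduces{} \R'$, we conclude that $\R'$ is time-coherent. Formally, I would phrase this as an induction on the number of reduction steps witnessing $\reducesStar{}$, treating $\reducesStar{}$ as the transitive closure of $\reduces{}$ (which ranges over both $\actarrow{}$ and $\timearrow{}$), so that no case analysis on the kind of action is needed here—all of that work has already been discharged inside Lemma~\ref{time-invariant}.

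I anticipate essentially no obstacle in this proof, since it is a routine corollary: all the genuine content lives in Lemma~\ref{time-invariant}, whose proof proceeds by induction on the derivation of $\reduces{}$ and hinges on the fact that rule $\ParTime$ is the only rule capable of advancing the time of a parallel composition, and that it advances every component simultaneously. The only point requiring a moment of care is connecting the premise ``$\R$ is initial'' to ``$\R$ is time-coherent'' at the base of the induction; this is immediate from Definition~\ref{initial}, which defines initiality to include $\stime(\R) = 0$, together with Definition~(Time coherence) which declares $\R$ time-coherent precisely when $\stime(\R)$ is defined. Thus the proof is a two-line argument: establish the base case from initiality, and iterate Lemma~\ref{time-invariant} along the reduction sequence.
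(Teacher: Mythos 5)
Your proof is correct and follows exactly the paper's route: the paper likewise obtains the proposition by noting that an initial system has $\stime(\R)=0$ defined (hence is time-coherent) and then iterating Lemma~\ref{time-invariant} along the reduction sequence, which is precisely your induction on the number of reduction steps.
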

We assume any system $\R$ to start off as initial and hence, by Prop.~\ref{time-coherence}, to be time-coherent.

Next, we give a desirable property for timed models: non-Zenoness. This prevents an infinite number of communication actions at any given time (Zeno behaviours). Besides yielding a more natural abstraction of a real world system, non-Zenoness simplifies analysis; for example, we can assume that the set of states reachable without time passing is finite. We start by defining a non-instantaneous process.

\begin{defi}[Non-instantaneous process]
We define function $\ninsta(P)$ inductively as follows:

\[\ninsta(P)\! =\! \begin{cases}
	\! \bigwedge_{i\in I} \ninsta(P_i) & \!\!\!  \text{if $P = \tsnd{\nid{n_i}}{m_i}{P_i}$ or $P = \trcv{p_i.P_i}{\afterk\, {Q}}  $} \\
	\! \ninsta(Q) & \!\!\!  \text{if $P = \mu \tvar . Q$}\\
	\! \true & \!\!\!  \text{if $P = \sleep . Q$}\\
	\! \false & \!\!\!  \text{if $P = \tvar$ or $P=0$}
\end{cases}
\]
We say that $P$ is non-instantaneous if $\ninsta(P) = \true$. We say that $\R$ is non-instantaneous if all nodes in $\R$ run non-instantaneous processes.
\end{defi}

\begin{prop}[Non-Zenoness]\label{nonzeno}
Let $\R$ be non-instantaneous. If $\R\reducesStar{}\R'$ then there is a finite number of $\R''$ such that $\R'\actarrow{}\R''$.
\end{prop}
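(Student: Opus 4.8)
The plan is to read the statement in the sense suggested by the surrounding discussion: from any reachable configuration only finitely many states are obtainable \emph{without letting time pass}. Whether one reads the successor set as one step or as all states reachable by communication, the one-step case is immediate \emph{finite branching} --- a system is a finite parallel composition of nodes and (latent/floating) messages, and each component offers only finitely many $\actarrow{}$-moves (a send ranges over the finite index set $I$, a receive over the finitely many matching patterns, a floating message may only be scheduled or lost, a node may only crash or restart). The substance therefore lies in showing there is no infinite communication chain $\R' \actarrow{} \R_1 \actarrow{} \R_2 \actarrow{} \cdots$; finite branching together with this and König's lemma then yields the finite reachable set. This is the only place where the hypothesis that $\R$ is non-instantaneous is used.

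First I would record the \emph{invariant} that every system reachable from a non-instantaneous $\R$ still runs non-instantaneous processes, proved by induction on the derivation as for \cref{time-invariant}. The delicate cases are those that install a fresh running process: $\Rcv$ and $\Snd$ leave a continuation $P_j$ with $\ninsta(P_j)=\true$ by the definition of $\ninsta$ on choices; $\Checkpoint$ passes to $P$ while simultaneously storing $P$, so the invariant must be strengthened to cover the \emph{saved checkpoint} too, so that $\Up$ (which reinstates $Q$) cannot resurrect an instantaneous process; and $\Timeout$ installs the timeout handler, which forces me to read $\ninsta$ so that the handler is itself non-instantaneous (here I would flag that the guard of $\ninsta$ must range over the handler as well). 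With this invariant, $\ninsta(P)=\true$ guarantees, through the guardedness it encodes, that on every branch a $\sleepk$ prefix is met before any $\rvar{t}$ or $\inact$, so recursion cannot be unfolded instantaneously without end.

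The heart is a well-founded measure $M(\R)$ that strictly decreases along each $\actarrow{}$ step. For a non-instantaneous process I would define $w(P)$, the maximal number of instantaneous prefixes crossed before a $\sleepk$, by $w(\sleepk.P')=0$, $w$ of a send or receive equal to $1+\max_i w(P_i)$, $w(\savek.P')=1+w(P')$, and $w(\rec{t}P')=w(P')$; this is finite precisely because recursion variables sit behind delays. I would then take $M(\R)$ lexicographically, with a top component counting the node state-changes still available (because $\actarrow{}$ steps never advance a node's clock, the value $\downf{t}{\nid n}$ stays fixed along a communication chain, so each node may fire $\Down$ or $\Up$ at most once), and a lower component $A\sum_n w(P_n) + B\cdot(\text{number of messages}) + \sum_n |\mb_n|$ with constants $A>B>1$. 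Then $\Snd$ trades one unit of $w$ for a message ($-A+B<0$), $\Sched$ trades a message for a mailbox entry ($-B+1<0$), $\Rcv$ and $\Checkpoint$ spend a unit of $w$, and $\MsgLoss$ drops a message, all strictly decreasing the lower component, while $\Down$ and $\Up$ strictly decrease the top one.

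The step I expect to be the main obstacle is reconciling $\Up$ and the message-driven cascade within one well-founded order. $\Up$ is the only rule that can \emph{raise} process weight, since it reinstates the checkpoint $Q$, and a send spawns a message that later triggers further instantaneous work (scheduling, then reception in another node), so a naive size measure is non-monotone. The fixed-$\Delta$-per-clock observation is what rescues the argument: restarts and crashes being available at most once per node at a given time confines them to the strictly decreasing top component, while the choice $A>B>1$ ensures the producer/consumer cascade is paid for in advance out of the sender's weight. Once $M$ is shown to decrease on every $\actarrow{}$ step, no infinite communication chain exists, and finite branching plus König's lemma deliver the finite reachable set.
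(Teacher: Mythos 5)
Your proof is correct, and it is a genuinely different (and far more complete) argument than the paper's. The paper's entire proof is the remark that the claim follows by induction on the structure of $\R'$, since a non-instantaneous actor can perform only finitely many instantaneous actions at a fixed time; that remark really only covers the one-step finite-branching reading, and it is silent on every point where the real work lies: preservation of non-instantaneousness under $\reducesStar{}$, the cross-node cascade $\Snd$--$\Sched$--$\Rcv$ by which one instantaneous action funds later ones, the fact that $\Up$ re-installs a stored process and so breaks any naive size measure, and the final step from finite branching plus absence of infinite $\actarrow{}$-chains to a finite reachable set (K\"onig). Your lexicographic measure supplies exactly this missing bookkeeping, and its soundness rests on the right observation: $\actarrow{}$ never advances clocks, so $\Delta$ at the current time is frozen and each node can fire $\Down$ or $\Up$ at most once, which confines the only weight-increasing rule ($\Up$) to a strictly decreasing dominant component, while the choice $A>B>1$ makes a send pay in advance for the scheduling and reception it enables; guardedness of recursion variables under $\sleepk$ is what makes $w$ finite and invariant under unfolding in rule $\Rec$. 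Equally important, the two repairs you flag are genuine defects of the paper's definition of $\ninsta$, not artifacts of your method: as written, $\ninsta$ never inspects the $\afterk$ handler, so a node may time out into an unguarded recursive sender and then emit messages forever without time passing (refuting the intended reading, stated just before the proposition, that only finitely many states are reachable without time passing), and $\ninsta$ has no clause for $\savek.P$ at all; likewise the invariant must indeed be strengthened to saved and initial checkpoints (those given by $\cpf$), or $\Up$ could resurrect an instantaneous process. In short, the paper gives the per-actor intuition; you give the actual termination proof, and in doing so you show that the proposition, under its intended reading, holds only after non-instantaneousness is redefined to cover timeout handlers and checkpoints.
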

The proof is straightforward by induction on the structure of $\R'$. Intuitively, any non-instantanous actor can only make a finite number of instantaneous actions at any given time, and hence at time $\stime(\R')$.
Hereafter we assume systems to be non-instantaneous, and hence non-Zeno.

\subsection{Reset vs Checkpointing Systems}
\label{reset}

We call \emph{reset systems} those systems obtained using the grammar for systems but without the save processes $\savek.P$, and where $\Gamma = \Sigma$. Reset systems model systems where each node reacts to (presumed) failure by restarting the execution from the initial state. More formally: 

\begin{prop}[Reset systems]\label{prop:reset}
If $\R$ is reachable from an initial reset system then for all $\nodecp{n}{P}{\mb}{t}{Q}$ and $\R'$ 
such that $\R = \nodecp{n}{P}{\mb}{t}{Q}\pll \R'$ we have $Q=\pf(\nid n)$.
\end{prop}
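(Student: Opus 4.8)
The plan is to strengthen the statement into an invariant preserved by every reduction step, exploiting that the only rule able to change a node's saved process is $\Checkpoint$, and that this rule can never fire in a reset system. Call a system \emph{good} if every node component $\nodecp{n}{P}{\mb}{t}{Q}$ occurring in it (including crashed nodes, where $P=\csdown$) satisfies both (i) $Q=\pf(\nid n)$ and (ii) $P$ contains no $\savek$ subterm. I would show that initial reset systems are good, that goodness is closed under $\equiv$ and preserved by $\actarrow{}$ and $\timearrow{}$, and then conclude by induction on the length of the reduction sequence witnessing reachability, reading off clause (i) as the desired conclusion.

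For the base case, an initial reset system has each node of the form $\nodecp{n_i}{\pf(\nid{n_i})}{\emptyset}{0}{\cpf(\nid{n_i})}$ by Definition~\ref{initial}; since reset systems require $\Gamma=\Sigma$, that is $\cpf=\pf$, the saved process equals $\pf(\nid{n_i})$, giving (i), and $\pf(\nid{n_i})$ is a term of the $\savek$-free reset grammar, giving (ii). Closure under $\equiv$ is immediate, as $\equiv$ merely reorders parallel components and rewrites messages and the empty system, leaving each node's $P$ and $Q$ untouched.

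For the inductive step I would assume $\R$ good with $\R\reduces{}\R'$ and prove $\R'$ good by induction on the derivation, inspecting each rule of Figure~\ref{fig:reduction}. The crux is that $\Checkpoint$ requires the running process to have the shape $\savek.P$, which clause (ii) rules out, so it is never applicable and no checkpoint is ever rewritten. Every other node rule keeps $Q$ fixed and moves $P$ to a still-$\savek$-free process: $\Snd$, $\Sleep$ and $\Timeout$ pass to a syntactic subterm; $\Rcv$ passes to $P_j\sigma$, where $\sigma$ substitutes only values for variables and hence cannot create a $\savek$; and $\Up$ reinstalls $Q$, which by (i) is $\pf(\nid n)$ and so $\savek$-free. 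The message and link rules ($\Sched$, $\Latency$, $\MsgLoss$, $\MsgLate$) and the node-failure rules ($\Down$, $\NLate$, $\NDLate$) neither introduce nor disturb any $P$ or $Q$ in a way that breaks goodness, and the congruence rules $\ParCom$ and $\ParTime$ follow from the inductive hypothesis on the reducing subsystem while the rest stays good.

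The one case deserving care is $\Rec$, whose premise reduces the \emph{unfolded} node $\nodecp{n}{P\subst{\rec{t}P}{\rvar{t}}}{\mb}{t}{Q}$ rather than the node itself. I would dispatch it through the derivation induction together with the observation that unfolding preserves $\savek$-freeness: if $\rec{t}P$ has no $\savek$ then neither does $P$, and substituting the $\savek$-free $\rec{t}P$ for $\rvar{t}$ leaves the body $\savek$-free, so the unfolded node is good and the inductive hypothesis applies to the premise. The main obstacle is thus not deep but bookkeeping: one must verify once and for all that both kinds of substitution occurring in the semantics, namely value substitution in $\Rcv$ and recursion unfolding in $\Rec$, cannot manufacture a $\savek$. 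Once this is recorded, together with the fact that $\Up$ only ever installs $\pf(\nid n)$, all remaining cases are routine.
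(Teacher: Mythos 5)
Your proposal is correct and takes essentially the same route as the paper: the paper proves the statement as an invariant of reset systems, established for initial reset systems and preserved by reduction via case analysis on the reduction rules (phrased there as a coinductive argument), with preservation of $\savek$-freeness implicit in the fact that reset-system-hood itself is invariant. Your explicit strengthening to "goodness" --- tracking both $Q=\pf(\nid n)$ and $\savek$-freeness so that rule $\Checkpoint$ provably never fires --- merely spells out the bookkeeping that the paper's one-line sketch leaves to the reader.
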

The property above is proved straightforwardly by coinduction, showing that having checkpoint $\pf(\nid n)$ in all nodes is a property of initial reset systems and an invariant of reset systems preserved by reduction (by case analysis on the reduction rules).  

Reset systems are common in Erlang: robustness is provided by a supervision hierarchy which explicitly describes the ways in which parts of the system are restarted when they or other parts fail. Restarts can escalate: if a component repeatedly restarts, then its parent process may itself have to be restarted. 

While Erlang provides no explicit mechanism for checkpointing, it is possible to save state periodically using bulk storage known as ETS-tables. These provide global storage from which state can be retrieved, always assuming that the tables themselves are preserved. Disk-based ETS-tables (DETS-tables) provide more permanent storage, but with an associated time cost.

In fact, in the short version of this article \cite{BLTV22} we focussed on a formalisms that corresponds to reset systems. 
Here, we explore a more general setting, to show a more interesting relationship of our work to the ones in \cite{10.1145/311531.311532}\cite{10.1007/s00165-017-0426-2}, and particularly to the notion of \emph{non-masking} fault-tolerance therein.

 \section{Properties of cursed systems}
\label{sec:properties}

In this section we define a behavioural relation between cursed systems, as a weak barbed bisimulation, which is the standard choice since we have a reduction semantics~\cite{SW01}. The aim is to compare the systems' abilities to preserve `normal' functionality when they are affected by failures. We abstract from the fact that some parts of the system may be deadlocked, as long as healthy actors can keep receiving the messages they expect. Mailbox-based (rather than point-to-point) communication and pattern matching allow us to capture e.g., multiple-producer scenarios where a consumer can receive the expected feeds as long as \emph{some} producers are healthy. 

Our behavioural relation also abstracts from time, to disregard the delays introduced by recovering actions, and only observes the effects of such delays (we do not focus on efficiency). Essentially, two systems are equivalent when actors receive the same messages, abstracting from senders, in a time-abstract way.\footnote{Abstracting from timing and message senders is an assumption of our model that we adopted for the sake of generality: it allows us to capture scenarios where the timing and order of the messages does not matter (e.g., multiple producers). On the other hand the model can encode scenarios where such orders matter. It can, for example, support Erlang-style actor behaviour. Erlang does not guarantee temporal order of messages between different processes in general, however between any two processes it does guarantee that messages sent directly between them will be received in the same order. Erlang behaviour can be encoded in our model if messages are extended to include the identity of the sender and a counter (e.g., as atoms) to guarantee message origin and ordering.} On the basis of this equivalence we define \emph{recoverability} and \emph{augmentation}.

We start by defining weak barbed simulation for cursed systems.
\newcommand{\mybarb}[1]{\downarrow\, {#1}}
\begin{defi}[Barb]\label{barb}
The ready actions of $P$ are defined inductively as follows:
\[\begin{array}{lll}
\ready{  \tsnd{\nid{n}_i}{m_i}{P_i}} =  \{ \snd \nid{n}_i\, m_i \}_{ i\in I}  & \quad
\ready{\trcv{p_i.P_i}{\after{}{ P }}} = \{\rcv p_i \}_{i\in I} \\
\ready{0}=\ready{\mathtt t} = \ready{\sleepk.P} = \emptyset  & \quad
 \ready{\mu \mathtt t. P} = \ready{P}
\end{array}\]

Let $\R \downarrow \brb$ be the least relation satisfying the rules below.
\[\begin{array}{lll}
\node{n}{P}{\mb}{t} \mybarb{x} & \quad \text{if $ \snd \nid{n}' m \in\ready{P} \land x = \snd \nid{n}' m~\lor~\rcv p \in \ready{P} \land x = \rcv \nid n\, p$}\\

(\nid{n}_1,\nid{n}_2,m) \mybarb{\snd \nid{n_2}\, m} \\
(\R_1 \parallel \R_2)\mybarb{\brb} &   \quad  \text{if $\R_1\mybarb{\brb}$ or   $\R_2\mybarb{\brb}$}\\
\end{array}\]
If $\R\mybarb{\brb}$ we say that $\R$ has a \emph{barb} on $\brb$.
\end{defi}

Barbs abstract from (i.e., do not include in the model of observation) the sender of a message. This allows us to disregard the identity of the senders, following mailbox-based communications in actor-based systems. Scenarios where the identity of the sender is important can be encoded by using node identifiers as message content.\footnote{This is precisely how sender information is communicated in Erlang.} We observe $m$ and $p$ to retain expressiveness with respect to channel-based scenarios, as discussed in Section~\ref{scopes}.

\begin{exa}[Examples on barbs]\label{ex:barbs}
Consider a system $\R_{R}$ with a consumer node $\mathtt c$ receiving data $\mathtt d$ from two replicas $\mathtt r1$ and $\mathtt r2$. If the messages from both replicas are delayed then the consumer notifies a monitor node $\mathtt m$ (omitted here for simplicity): 
\[\begin{array}{lll}
\R_{R} =    \node{c}{\mu \mathtt t. ?\mathtt{d}.  \mathtt t. \after\, \color{red}2\color{black} ~  !\mathtt{m}\,\, \mathtt{fail}  }{\emptyset}{0}\parallel
\node{r1}{\mu \mathtt t.!\mathtt{c}\,\mathtt{d}.\mathtt t}{\emptyset}{0} \parallel \node{r2}{\mu \mathtt t.!\mathtt{c}\,\mathtt{d}.\mathtt t}{\emptyset}{0}
\end{array}
\]

Regarding our choice of barbs in this example, the consumer needs to receive regular feeds $\mathtt d$, no matter whether they are from $\mathtt{r1}$ or $\mathtt{r2}$. Abstracting away from the identity of the sending replica is directly captured by our definition of barbs. In fact, the set of barbs of $\R_{R}$ is $\{\snd \nid{c}\, \mathtt d, \, \rcv \nid{c}\, \mathtt d\}$. 

A system defined in the same way as $\R_R$ but with only one replica, e.g.\ obtained by removing node $\mathtt{r2}$, or with one of the replicas down, e.g.\ obtained by substituting node $\mathtt{r2}$ with $ \node{r2}{\downarrow}{\emptyset}{0}$, would have the same set of barbs as $\R_R$, namely $\{\snd \nid{c}\, \mathtt d, \, \rcv \nid{c}\, \mathtt d\}$. 

It is worth noting that if the identity of the sender does matter, it can be observed by encoding the identity into the messages being sent by the sender: 
  \[\begin{array}{lll}
\R'_{R} = &    \node{c}{\mu \mathtt t. \{ ? \mathtt{n1}\, \mathtt{d}.  \mathtt t., \, ? \mathtt{n2}\, \mathtt{d}.  \mathtt t. \after\, \color{red}2\color{black} ~  !\mathtt{m}\,\, \mathtt{fail}  }{\emptyset}{0}\parallel \\
& \node{r1}{\mu \mathtt t.!\mathtt{c}\,\mathtt{n1}\, \mathtt{d}.\mathtt t}{\emptyset}{0} \parallel \node{r2}{\mu \mathtt t.!\mathtt{c}\,\mathtt{n2}\, \mathtt{d}.\mathtt t}{\emptyset}{0}
\end{array}
\]
The set of barbs of $\R'_{R} $ is $\{\snd \nid{c}\, \mathtt {n1} \, \mathtt d, \snd \nid{c}\, \mathtt {n2} \, \mathtt d, \, \rcv \nid{c}\, \mathtt {n1} \,\mathtt d, \, \rcv \nid{c}\, \mathtt {n2} \,\mathtt d\}$. If node $\mathtt{r2}$ was removed or crashed, the set of barbs would be affected, becoming $\{\snd \nid{c}\, \mathtt {n1} \, \mathtt d,  \, \rcv \nid{c}\, \mathtt {n1} \,\mathtt d, \, \rcv \nid{c}\, \mathtt {n2} \,\mathtt d\}$, and making it possible to distinguish among senders. 
\end{exa}

\noindent
\begin{defi}[Weak barbed simulation]\label{simulation}
Recall $\reduces{}\in \{\actarrow{}, \timearrow\}$.
A weak (time-abstract) barbed simulation is a binary relation $\simu$ between cursed systems such that $(\R_1,\Delta_1) \simu (\R_2,\Delta_2)$ implies:
\begin{enumerate}
\item If $(\R_1,\Delta_1)\reduces{} (\R_1',\Delta_1)$ then there exists $\R_2'$ such that $(\R_2,\Delta_2) \reducesStar{}(\R_2',\Delta_2)$ and $(\R_1',\Delta_1) \simu (\R_2',\Delta_2)$.
\item If $\R_1 \mybarb{\brb}$ for some $\brb$, then there exists $\R_2'$ such that $(\R_2,\Delta_2) \reducesStar{} (\R_2',\Delta_2)$ and $\R_2'\mybarb{\brb}$.
\end{enumerate}

\noindent We say $(\R_1,\Delta_1)$ is weak barbed similar to $(\R_2,\Delta_2)$, written $(\R_1,\Delta_1) \lesssim (\R_2,\Delta_2)$, if there exists some weak barbed simulation $\simu$ such that $(\R_1,\Delta_1) \simu (\R_2,\Delta_2)$. 
\end{defi}

\begin{defi}[Weak barbed bisimulation]\label{bisimulation}
We say that $\simu$ is a weak barbed bisimulation if $\simu$ and $\simu^{-1}$ are weak barbed simulations.
We say $(\R_1,\Delta_1)$ is weak barbed bisimilar to $(\R_2,\Delta_2)$, written $(\R_1,\Delta_1) \approx (\R_2,\Delta_2)$, if there exists some weak barbed bisimulation $\simu$ such that $(\R_1,\Delta_1) \simu (\R_2,\Delta_2)$. 
\end{defi}

The rules in Figure \ref{fig:reduction} embody mailbox-based communication. These rules allow us to observe messages `in flight', and so allow us to observe that some such messages have been affected by a curse, reflecting an insecure communication medium. Note, however, that it is \emph{not} possible to directly observe the contents of mailboxes; this can only be done indirectly -- and in general, partially -- by observing the behaviour of receive statements.

If we fix a system, we can use Definition \ref{simulation} to compare behaviours of that system with different curses, namely to determine when a system maintains its (desirable) behaviour even when cursed. 
Namely, Definition \ref{simulation} provides a means to study the ability of a systems to tackle failure, or its \emph{resilience}. Formally (Definition \ref{def:resilience}) we define resilience as the ability of a system to behave `normally' despite failure injection. In the following, we let $\uparrow$ be the curse function that assigns $\uparrow$ to all nodes and links at all times.

\begin{defi}[Resilience]\label{def:resilience}
Initial $(\R,\Delta)$ is resilient if \( (\R,\uparrow)\approx (\R,\Delta)\).
\end{defi}

\begin{exa}[Resilience]\label{ex:resilience}
Consider the system $\R$ below and curse $\Delta$ for which $(\mathtt{p},\mathtt{c})$ is down at time $1$ 
\[\begin{array}{lll}
\R =   \node{p}{\sleep{}.!\mathtt{c}\,\mathtt{item}.0}{\emptyset}{0}\parallel\node{c}{?\mathtt{item}. 0 \after\, \color{red}3\color{black} ~  0}{\emptyset}{0} 
\end{array}
\]
Fix the latency constant as $1$ time unit. System $\R$ is not resilient with respect to $\Delta$ since $(\R,\Delta)\not \lesssim(\R,\uparrow)$. Intuitively, observe that $(\R,\uparrow)$ reaches the terminated state 
\[
(  \node{p}{0}{\emptyset}{2}\parallel\node{c}{?\mathtt{item}. 0 \after\, \color{red}1\color{black} ~  0}{\mathtt{item}}{2} ,\uparrow)\]
with no barbs, whereas $(\R,\Delta)$ finally gets stuck in the state 
\[(  \node{p}{0}{\emptyset}{3}\parallel\node{c}{0}{\emptyset}{3}\parallel (\nid{p},\nid{c},\mathtt{item}),\Delta) \] 
with an orphan message and barb $?\nid{c}\,\mathtt{item}$. 

Different strategies can be applied to modify $\R$ so that it correctly handles the delays specified by $\Delta$. 
One is to tune the timeouts in the code so that it can handle the curse $\Delta$. Concretely, consider the following variant $\R'$ of $\R$, that increases the timeout value of one unit: 
\[
\R' =  \node{p}{\sleep{}.!\mathtt{c}\,\mathtt{item}.0}{\emptyset}{0}\parallel\node{c}{?\mathtt{item}. 0 \after\, \color{red}4\color{black} ~  0}{\emptyset}{0}
\]
One can verify that $(\R',\Delta) \approx(\R',\uparrow)$ and hence $\R'$ is resilient with respect to $\Delta$.
\end{exa}

Example~\ref{ex:resilience} shows a non-resilient cursed system $(\R, \Delta)$ and a resilient variant $(\R',\Delta)$ obtained by tuning the timeout in $\R$. In the following Example~\ref{ex:redundance} we provide two additional resilient variants of $(\R, \Delta)$ obtained using time-redundancy (e.g., retry strategies) and space-redundancy (e.g., replication). Redundancy has been shown~\cite{10.1145/311531.311532} to be a necessary condition for fault-tolerance. Resilience gives a tool to assess whether a `redundant' system is indeed attaining the intended fault-tolerance.  

\begin{exa}[Resilience and redundancy]\label{ex:redundance}
Consider $(\R,\Delta)$ from Example~\ref{ex:resilience} and, again, fix the latency constant as $1$ time unit. We define a variant of $\R$, called $\R_T$, where time-redundancy is attained by retrying the communication once more in case of timeout
\[
\R_T =   \node{p}{\sleep{}.!\mathtt{c}\,\mathtt{item}.0}{\emptyset}{0}\parallel\node{c}{?\mathtt{item}. 0 \after\, \color{red}3\color{black} ~  (?\mathtt{item}. 0 \after\, \color{red}3\color{black} ~  0)}{\emptyset}{0}
\]
Similarly, we define a variant of $\R$, called $\R_S$, where space-redundancy is applied by adding an extra producer: 
\[\begin{array}{ll}
\R_S =  & \node{p}{\sleep{}.!\mathtt{c}\,\mathtt{item}.0}{\emptyset}{0}\parallel \node{c}{?\mathtt{item}. 0 \after\, \color{red}3\color{black} ~  0 }{\emptyset}{0}\\
& \parallel\node{p'}{\sleep{}.!\mathtt{c}\,\mathtt{item}.0}{\emptyset}{0}
\end{array}\]
One can verify that both $\R_T$ and $\R_S$ are resilient with respect to $\Delta$ from Example \ref{ex:resilience}. 
\end{exa}

Our definition of resilience sets the behaviour of a system without curses as a model of \emph{expected behaviour}. 
By Definition~\ref{def:resilience}, any deviation from the expected behaviour, even a temporary one, makes a system non-resilient. This is a very strict characterization of fault-tolerance. For example, resilience is too strong to capture the effects of more complex retry-strategies than those applied in $\R_T$ from Example~\ref{ex:redundance}, as shown in the Example~\ref{ex:retry} below.

\begin{exa}[Resilience and more complex retry strategies]\label{ex:retry}
Consider $\Delta$ from Example~\ref{ex:resilience}, latency of $1$ time unit, and a variant $\R_{TT}$ of $\R_T$, where time-redundancy affects both processes: 
\[\begin{array}{lll}
\R_{TT} = &   \node{p}{\mu \tvar. \, \sleep{}.!\mathtt{c}\,\mathtt{item}.\, ? \{ \mathtt{ok}\, . 0, \, \mathtt{retry}. \tvar\}}{\emptyset}{0}\parallel\\
& \node{c}{\mu \tvar.\,  ?\mathtt{item}.\ !\mathtt{p}\,\mathtt{ok} .\ 0 \after\, \color{red}3\color{black} ~ ( !\nid{p}\, \mathtt{retry}.\, \tvar)}{\emptyset}{0}
\end{array}\]
System $\R_{TT}$ is not resilient with respect to $\Delta$ from Example \ref{ex:resilience} because the nodes add some communications to acknowledge correct interaction or coordinate on a retry iteration. 
\end{exa}

In the remaining of this section, we study a less restrictive characterization than resilience, which we call recoverability, to allow for some deviation from the expected behaviour as long as the system \emph{eventually} resumes the expected behaviour. In Section~\ref{sec:recoverability-reset} we discuss recoverability.
In Section~\ref{sec:res0rec} we show a relation between resilience and recoverability.
In Section~\ref{sec:recoverability-cp} we provide a more general account of reliability that can easily capture reset and checkpointing systems. Section~\ref{sec:recoverability-reset} is based on the notion of $n$-recoverability first introduced in \cite{BLTV22}, which is fixed and improved.
Section~\ref{sec:recoverability-cp} is new. 

\subsection{Recoverability for reset-systems}
\label{sec:recoverability-reset}

We define $n$-recoverability as the ability of a system to display the expected behaviour after time $n$. The definition from \cite{BLTV22} had several issues that we have amended in this work. The original definition is as follows: 

\begin{defi}[$n$-Recoverability (from~\cite{BLTV22})]\label{recoveraold}
Let $n\in\nat$ and $(\R,\Delta)$ initial. $(\R,\Delta)$ is \emph{$n$-recoverable} 
if $(\R,\Delta)\reducesStar{}  (\R',\Delta)$ and  $\stime(\R')=n$,  implies $(\R,\uparrow) \approx (\R',\Delta)$.
\end{defi}

\begin{exa}[Counterexample]\label{ex:counter}
Fix a latency of $1$ time unit and a generic $\Delta$ that does not affect any node or link at time $0$. 
System $\R_{ce}$ below reduces to $\R_{ce}'$ after a communication action 
\[\begin{array}{lll}
\R_{ce} = &   \node{n1}{!\nid{n2}\,\mathtt{a}. 0}{\emptyset}{0}\parallel
 \node{n2}{\sleep{}.  ?\mathtt{a}. 0}{\emptyset}{0} \\
\R_{ce}' =  & \node{n1}{0}{\emptyset}{0}\parallel
 \node{n2}{\sleep{}.  ?\mathtt{a}. 0}{\emptyset}{0}\parallel \mathbf{1}.\PM{{n_1}}{{n}_2}{\mathtt a}{0}
 \end{array}\]
For $\R_{ce}$ to be $0$-recoverable, since  $(\R_{ce},\Delta)\actarrow{}(\R_{ce}',\Delta)$, one should have $(\R_{ce}', \Delta)\approx (\R_{ce}, \uparrow)$ which does not hold for any $\Delta$ because of a difference in barbs-- hence not even for $\Delta = \uparrow$.
\end{exa}

Example~\ref{ex:counter} shows that Definition~\ref{recoveraold} is too strict to capture the intended meaning of $n$-recoverability. In~\cite{BLTV22} for example, $0$-recoverability is (wrongly) set to correspond to resilience. Definition~\ref{recoveraold} requires that \emph{all} states at time $n$ are bisimilar to the initial state, and this is too strict since several actions may naturally happen in a time unit. 

We provide a weaker definition of $n$-recoverability, using universal quantification over \emph{paths} of actions at time $n$ and existential quantification on the states on each of these paths, which is set to better represent the intuition. First, we define the concept of $n$-entry, which is the set of states that are, for some execution, the first state to be reached at time $n$. Then a $n$-path is the maximal path from a $n$-entry where states are at time $n$.

\begin{defi}[$n$-Entry]\label{nentry}
Let $n\in\nat$ and $(\R_0,\Delta)$ be an initial state. If $n=0$ then $(R_0,\Delta)$ is the only $0$-entry for itself.
If $n>0$, $(\R,\Delta)$ is a $n$-entry for $(\R_0,\Delta)$ if there exists an execution $(\R_0,\Delta)\reducesStar{} (\R',\Delta)\rightsquigarrow(\R,\Delta)$ with $\stime(\R)=n$.
\end{defi}
A $n$-entry $(\R,\Delta)$ is the first state to be reached at time $n$. Observe that in Definition~\ref{nentry} if $n>0$ then it is always the case that $\stime(\R')=n-1$. We define an \emph{execution} $(\R_1,\Delta)\rightharpoonup^*(\R_m,\Delta)$ to be a sequence of configurations $(\R_i,\Delta)$, with $1 \leq i \leq m-1$ such that $(\R_i,\Delta)\rightharpoonup(\R_{i+1},\Delta)$.

\begin{defi}[$n$-Path]\label{nparth}
Let $n\in\nat$ and $(\R_0,\Delta)$ be an initial state. 
Execution $(\R_1,\Delta)\rightharpoonup^*(\R_m,\Delta)$ is a $n$-path for $(\R_0,\Delta)$ if: (1)  $(\R_1,\Delta)$ is a $n$-entry for $(\R_0,\Delta)$, and (2) $(\R_m,\Delta)$ cannot make other actions than time actions. 
\end{defi}

Observe that in Definition~\ref{nparth} a $n$-path $(\R_1,\Delta)\rightharpoonup^*(\R_m,\Delta)$ includes no time actions. 

\begin{defi}[$n$-Recoverability (new)]\label{recoveranew}
Let $n\in\nat$ and $(\R_0,\Delta)$ be an initial state. $(\R_0,\Delta)$ is $n$-recoverable if for all of its $n$-paths 
$(\R_1,\Delta)\rightharpoonup^*(\R_m,\Delta)$ there is $i\in\{1,\ldots,m\}$ such that $(\R_i,\Delta)\approx(\R_0,\uparrow)$.
\end{defi}

Definition~\ref{recoveranew} says that in any arbitrary $n$-path there exists a state $(\R_i,\Delta)$ reachable at time $n$ that is weak-barbed bisimilar to $(\R_0,\uparrow)$.

\begin{exa}[$n$-Recoverability]\label{ex:nrec}
Consider the system $\R$ below (and any $\Delta$ that does not  affect the system at times $0$ and $1$): 
\[
\R =  \node{n_1}{!\mathtt{n_2}\,\mathtt{a}.\emptyset}{\emptyset}{0}\parallel\node{n_2}{?\mathtt{a}. \emptyset}{\emptyset}{0}
\]
$(\R, \Delta)$ reduces to the successfully terminated system $(\R', \Delta)$ with 
\[\R' = \node{n_1}{\emptyset}{\emptyset}{1}\parallel\node{n_2}{\emptyset}{\emptyset}{1}\] 
at time zero.
\end{exa}

\begin{exa}[$n$-recoverability and more complex retry strategies]\label{ex:retryn}
	Consider $\Delta$ from Example~\ref{ex:redundance}, latency of $1$ time unit, and a variant $\R_{TT}$ of $\R_T$, where time-redundancy affects both processes: 
	\[\begin{array}{lll}
		\R_{TT} = &   \node{p}{\mu \tvar. \, \sleep{}.!\mathtt{c}\,\mathtt{item}.\, ? \{ \mathtt{ok}\, . 0, \, \mathtt{retry}. \tvar \} \after\,{\color{red}5}~ 0}{\emptyset}{0}\parallel
		\\
		& \node{c}{\mu \tvar.\,  ?\mathtt{item}.!\mathtt{p}\,\mathtt{ok}. 0 \after\, \color{red}4\color{black} ~ ( !\nid{p}\, \mathtt{retry}.\, \tvar)}{\emptyset}{0}
	\end{array}\]
	System $\R_{TT}$ is not resilient with respect to $\Delta$ from Example \ref{ex:resilience} because the nodes add some communications to acknowledge correct interaction or coordinate on a retry iteration. It is however $n$-recoverable with $n = 6$.
\end{exa}

By Definition~\ref{recoveranew}, checking resilience and $n$-recoverability is reduced to the problem of checking weak barbed bisimulation.  Note that, in Definition~\ref{recoveranew}, the number of $\R'$ that can be reached from $\R$ is finite, because the execution up to $\R'$ lasts for $n$ time units and, by Proposition~\ref{nonzeno}, a system can perform only a finite number of actions in a finite amount of time.

In the following, we show that resilience is equivalent to $0$-recoverability. This fact was conjectured for not formally proven in~\cite{BLTV22}. This result is given in Section~\ref{sec:res0rec}.

\subsection{Resilience is equivalent to $0$-recoverability}
\label{sec:res0rec}

Equivalence of resilience and $0$-recoverability (Theorem~\ref{res0rec}) is based on two facts: 
\begin{enumerate}
\item for all initial systems $\R$, $(\R,\Delta)\gtrsim(\R,{\uparrow})$, given in Lemma~\ref{lem:0recoverability2}, and 
\item for all initial and $0$-recoverable systems $\R$, $(\R,\Delta)\lesssim(\R,{\uparrow})$, given directly in Theorem~\ref{res0rec}.
\end{enumerate}
Lemma~\ref{lem:0recoverability2} is based on a property that we call $\uparrow$-consistency. $\uparrow$-consistency correlates the syntactic structure cursed system and their corresponding counterparts with curse $\csup$, as they evolve.

\begin{defi}[$\uparrow$-consistency]
\label{ciao} 
Two systems $\R_{\Delta}$ and $\R_{\uparrow}$ are \emph{$\uparrow$-consistent} if there exist $\R$, $\R_u$, $\R_d$, 
and $\R_f$ such that 
\[\R_{\uparrow} = \R \pll \R_u \pll \R_f \quad  \quad \R_{\Delta} = \R \pll \R_d\] 
and: 
\begin{itemize}
\item $\R_u$ and $\R_d$ are parallel compositions of the same (possibly empty) set of nodes. 
\item the nodes in $\R_d$ are all down, i.e., of the form $\nodecp{n}{\csdown}{\emptyset}{\tval}{Q}$, 
\item $\R_f$ is the parallel composition of a (possibly empty) set of latent or floating messages.
\end{itemize}
\end{defi}

Intuitively, $\uparrow$-consistency defines a structural relation between the evolution of a system with and without curses: $\R$ models the parts of the system (if any) that $\R_{\uparrow}$ and $\R_{\Delta}$ have in common; 
$\R_u$ and $\R_d$ are the nodes that are up in $\R_{\uparrow}$ and down in $\R_{\Delta}$, respectively (they model the difference between $R_\uparrow$ and $R_\Delta$ wrt.\ crashed nodes); 
 moreover, $\R_{\uparrow}$ may have some additional floating messages, represented by $\R_f $, that have been lost in 
$\R_{\Delta}$.

$\uparrow$-consistency enjoys two properties. The first one, given in Lemma~\ref{ucon1}, is that 
instantaneous actions preserve $\uparrow$-consistency and does not decrease the number of down nodes in the cursed system. The second one, given in Lemma~\ref{ucon2}, ensures that the barbs of the cursed system are always a subset of those of the uncursed counterpart.

\begin{lem}\label{ucon1}
If $\R_{\Delta}$ and $\R_{\uparrow}$ are \emph{$\uparrow$-consistent} and 
$(\R_{\Delta}, \Delta) \actarrow{}(\R'_{\Delta}, \Delta)$ then
\begin{enumerate}
\item  there exists $\R'_{\csup}$ such that 
$(\R_{\csup}, \uparrow) \actarrow{}(\R'_{\csup}, \uparrow)$ and $\R'_{\Delta}$ and $\R'_{\uparrow}$ are \emph{$\uparrow$-consistent} 
\item the set of down nodes in $\R_{\Delta}$ is a subset of the set of down nodes in $\R_{\uparrow}$.
\end{enumerate}
\end{lem}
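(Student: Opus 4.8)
The plan is to prove both parts simultaneously by a case analysis on the last rule used to derive the instantaneous step $(\R_\Delta,\Delta)\actarrow{}(\R'_\Delta,\Delta)$, working throughout from the $\uparrow$-consistency witness $\R_\uparrow = \R \pll \R_u \pll \R_f$ and $\R_\Delta = \R \pll \R_d$. I would dispatch the administrative rules first: $\Str$ lets me reason up to $\equiv$, $\ParCom$ localises the action to a single summand, and $\Rec$ is absorbed by unfolding the recursion and then invoking the case for the underlying communication or failure rule. Since every node of $\R_d$ is crashed (of the form $\nodecp{n}{\csdown}{\emptyset}{\tval}{Q}$), the only instantaneous rule it can fire is $\Up$; every other rule must therefore act on the shared part $\R$, and for $\Sched$ the recipient is necessarily a running node of $\R$, as a crashed term cannot receive a message.

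For the genuine communication rules ($\Snd$, $\Sched$, $\Rcv$, $\Checkpoint$) firing in $\R$, the argument is routine. Each such rule carries a premise $\downf{\tk}{\cdot}=\csup$ that is satisfied a fortiori under the curse $\uparrow$, so the identical rule fires on the identical subterm of $\R_\uparrow$, rewriting the shared part to a common $\R'$ while leaving $\R_u$, $\R_d$ and $\R_f$ untouched (for $\Snd$, the same latent message $L.\PM{n}{n_j}{\msg_j}{\tval}$ is appended on both sides). The three clauses of $\uparrow$-consistency are then immediate, and the crashed-node content of both witnesses is unchanged, so part 2 holds trivially for these cases.

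The failure rules are where the two witnesses genuinely diverge, and here the matching move of $\R_\uparrow$ is vacuous, as permitted by weak simulation (the single $\actarrow{}$ in the statement being read as the weak matching $\reducesStar{}$ of zero or more steps). For $\Down$ applied to a node $\nodecp{n}{P}{\mb}{\tval}{Q}$ of $\R$, I migrate that node out of the shared part: the crashed residue $\nodecp{n}{\csdown}{\emptyset}{\tval}{Q}$ joins $\R_d$, its pre-crash image remains on the $\uparrow$ side as a new member of $\R_u$, and $\R_\uparrow$ idles since $\Down$ is disabled under $\uparrow$. Symmetrically, for $\MsgLoss$ the deleted message is relocated from $\R$ into $\R_f$ on the $\uparrow$ side, with $\R_\uparrow$ again idling. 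In both cases the rebuilt witness satisfies the three clauses, and the crashed-node set of the cursed system can only grow (it gains $\nid n$ under $\Down$, and is unchanged under $\MsgLoss$), which is exactly the content of part 2.

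The main obstacle is the $\Up$ case, i.e.\ the restart of a node of $\R_d$: firing $\Up$ turns $\nodecp{n}{\csdown}{\emptyset}{\tval}{Q}$ into the running node $\nodecp{n}{Q}{\emptyset}{\tval}{Q}$, which by the shape required of $\R_d$ can no longer sit in the crashed component and must re-enter the shared part, whereas its counterpart in $\R_u$ is running a freely evolved process $P_n \neq Q$. Re-establishing $\uparrow$-consistency therefore cannot be purely structural at the restarting node, and this is where the real work lies: I expect to argue that the restarted checkpoint $Q$ is a state already traversed by the healthy node, so that a matching $\reducesStar{}$ of $\R_\uparrow$ lets me re-partition the two systems with $\nid n$ moved back into the shared part. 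Pinning down this re-partition, and checking that it does not spuriously discard a crashed node so that the part 2 inclusion is maintained, is the delicate step; the remaining cases are bookkeeping against the three clauses of the definition.
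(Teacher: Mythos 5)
Your case analysis coincides with the paper's proof everywhere except at the point you yourself single out as ``the main obstacle'', and there the gap is real: the $\Up$ case cannot be closed by the argument you sketch. $\uparrow$-consistency demands that the shared component $\R$ be \emph{syntactically identical} in both systems. After $\Up$ fires, the restarted node $\nodecp{n}{Q}{\emptyset}{\tval}{Q}$ is no longer down, so it cannot stay in $\R_d$ and must move into the shared part, which forces the healthy system to contain that exact term. But the corresponding node in $\R_u$ has evolved forward, and reductions are irreversible: even if it passed through the checkpoint state $Q$ earlier, no forward matching $\reducesStar{}$ can bring it back to $Q$ with an empty mailbox (it may already have emitted messages that now float in the system, and its own mailbox need not be empty). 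So no re-partitioning plus matching moves of $\R_{\uparrow}$ can restore $\uparrow$-consistency; indeed, if $\Up$ is allowed to fire, the lemma is simply false (take the healthy copy of the node to have terminated as $\inact$ while the checkpoint $Q$ is a non-trivial process: the cursed side restarts to $Q$, and the healthy side can never again exhibit that term in its shared part).

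The missing idea is that $\Up$ must be \emph{excluded}, not matched. The lemma is applied only inside the coinductive proof of Lemma~\ref{lem:0recoverability2}, where all systems sit at time $0$ and evolve by instantaneous actions only; the paper's proof disposes of this case with the observation that the restart rule ``cannot be applied at time $0$''. The justification is an impossibility argument on the curse, not a simulation argument: any node that is down in $\R_{\Delta}$ at time $0$ was crashed by rule $\Down$ at that same instant, which requires $\curse{0}{\nid{n}} = \csdown$, whereas $\Up$ requires $\curse{0}{\nid{n}} = \csup$; since $\Delta$ assigns a single value to $\nid{n}$ at each instant, both cannot hold, so $\Up$ never fires. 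Everything else in your proposal --- the communication cases, the treatment of $\Down$ and $\MsgLoss$ by migrating the affected component into $\R_d$ or $\R_f$ with a vacuous match, and the monotonicity of the set of down nodes --- agrees with the paper's proof, and your remark that the vacuous match needs the conclusion to be read weakly (as $\reducesStar{}$ rather than a single $\actarrow{}$) applies equally to the paper's own argument.
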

\begin{proof}
By induction on the derivation. In case of actions by [Snd], [Sched], [Rcv], and [Checkpoint] all yield that there is $\R'$ such that $\R'_{\csup}=\R' \pll \R_u \pll \R_f$ and $\R'_{\Delta}=\R' \pll \R_d$. Moreover, no down nodes are introduced in $\R'_{\Delta}$ and hence $\R'_{\Delta}$ and $\R'_{\uparrow}$ are \emph{$\uparrow$-consistent}. 
Rule [NUp] cannot be applied at time $0$. The only possible failure actions are [MsgDown] and [NDown]. 
In case for [MsgDown], there exist $\R'$ and $\R_f'$ such that $\R'_{\uparrow}=\R' \pll \R_u \pll \R_f'$ and $\R'_{\Delta}=\R' \pll \R_d$ where $\R'$ is as $\R$ but without the lost message, and $\R_f'$ is as $\R_f$ but with the addition of the lost message. The set of down nodes does not change, hence $\R'_{\Delta}$ and $\R'_{\uparrow}$ are {$\uparrow$-consistent}, yielding the thesis for this case. 
In case of [NDown], $\R'_{\uparrow}$ is of the form $\R' \pll \R'_u \pll \R_f$ and $\R'_{\uparrow}$ is for the form $\R' \pll \R'_d$ where $\R'$ is as $\R$ but without the node that went down, $\R'_d$ is as $\R_d$ but with the addition in parallel with the node that went down, similarly for $\R'_u$ but the node in this case is still up. The set of down nodes has increased in $\R'_{\Delta}$. It follows that $\R'_{\Delta}$ and $\R'_{\uparrow}$ are \emph{$\uparrow$-consistent}.

The cases for [Rec], [Str], and [ParCom] are immediate by induction. 
\end{proof}
 
\begin{lem}[$\uparrow$-consistency]\label{ucon2}
If $\R_{\Delta}$ and $\R_{\uparrow}$ are {$\uparrow$-consistent} then 
$\R_{\Delta}\mybarb{\brb}$ implies $\R_{\uparrow}\mybarb{\brb}$.
\end{lem}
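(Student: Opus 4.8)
The plan is to prove Lemma~\ref{ucon2} directly from the definition of $\uparrow$-consistency (Definition~\ref{ciao}) together with the inductive definition of barbs (Definition~\ref{barb}). By $\uparrow$-consistency we may write $\R_{\uparrow} = \R \pll \R_u \pll \R_f$ and $\R_{\Delta} = \R \pll \R_d$, where $\R_u$ and $\R_d$ range over the same set of node identifiers, every node in $\R_d$ is crashed (of the form $\nodecp{n}{\csdown}{\emptyset}{\tval}{Q}$), and $\R_f$ is a parallel composition of latent or floating messages. The key observation is that a barb is witnessed by a single syntactic component: by the rule for parallel composition, $(\R_1 \pll \R_2)\mybarb{\brb}$ holds iff $\R_1\mybarb{\brb}$ or $\R_2\mybarb{\brb}$, so a barb of the whole system is always produced by one of its atomic constituents (a node or a floating message).

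First I would assume $\R_{\Delta}\mybarb{\brb}$ and decompose the derivation across the parallel structure $\R \pll \R_d$. By the parallel rule, either $\R\mybarb{\brb}$ or $\R_d\mybarb{\brb}$. The crucial step is to rule out the second case: every node in $\R_d$ is of the form $\nodecp{n}{\csdown}{\emptyset}{\tval}{Q}$, whose running process is $\csdown$ rather than a genuine process term. Since $\csdown$ is not a process, $\ready{\cdot}$ assigns it no ready actions, so no crashed node can carry a barb. Hence the barb must come from $\R$, i.e.\ $\R\mybarb{\brb}$.

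Next I would lift this barb from $\R$ back up to $\R_{\uparrow} = \R \pll \R_u \pll \R_f$. Because $\R$ is literally a syntactic subterm of $\R_{\uparrow}$ (it is one side of a parallel composition), the parallel rule immediately gives $\R_{\uparrow}\mybarb{\brb}$: barbs are monotone under placing a system in a larger parallel context, so adding $\R_u$ and $\R_f$ in parallel can only add barbs, never remove them. This yields $\R_{\uparrow}\mybarb{\brb}$, as required.

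The argument is essentially a structural case analysis rather than an induction with a hard case, so no single step is a serious obstacle; the only point needing care is the claim that crashed nodes contribute no barbs. I would make this precise by noting that $\ready{\cdot}$ in Definition~\ref{barb} is defined only on process terms and that $\csdown$ is excluded from the process grammar (as already remarked in Definition~\ref{initial}), so the barb-introduction rule for nodes, which requires a ready send or receive action, simply cannot fire on $\nodecp{n}{\csdown}{\emptyset}{\tval}{Q}$. With that settled, the monotonicity of barbs under parallel composition closes the proof.
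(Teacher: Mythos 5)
Your proof is correct and follows essentially the same route as the paper's: decompose both systems via $\uparrow$-consistency, observe that the barbs of $\R_{\Delta}$ are those of $\R$ plus those of $\R_d$, note that $\R_d$ contributes no barbs since it consists only of crashed nodes, and conclude by monotonicity of barbs under parallel composition. Your added care in justifying why $\nodecp{n}{\csdown}{\emptyset}{\tval}{Q}$ has no barbs (because $\csdown$ is not a process term, so no $\ready{\cdot}$ rule applies) is exactly the point the paper dismisses as trivial, so the two arguments coincide.
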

 \begin{proof}
By $\uparrow$-consistency, there exist $\R$, $\R_u$, $\R_d$, and $\R_f$ such that $\R_{\uparrow} = \R \pll \R_u \pll \R_f$ and $\R_{\Delta} = \R \pll \R_d$. By Definition~\ref{barb} the barbs of 
$\R_{\uparrow}$ is the union of barbs of $\R$, $\R_u$, and $\R_f$, and the set of barbs of $\R_{\Delta}$ is the union of the barbs of $\R$ and $\R_d$. We only need to show that $\R_d$ does not have barbs that $\R_{\uparrow} $ does not have. This follows trivially from Definition~\ref{ciao} since $\R_d$ is the parallel composition of down nodes and hence has no barbs. 
\end{proof}

We can now prove a more general property of cursed systems at time $0$: 
\begin{lem}\label{lem:0recoverability2}
Let $\lesssim_{0}$ be the restriction of $\lesssim$ obtained considering only communication actions $\actarrow{}$ in Figure~\ref{fig:reduction} (i.e., no time-consuming actions $\timearrow{}$) on systems $\R$ such that $\stime(\R)=0$. It holds that 
\[(\R, \uparrow)\lesssim_{0}(\R,\Delta)\]
\end{lem}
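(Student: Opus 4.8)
The plan is to exhibit an explicit weak barbed simulation and then obtain the statement from a trivial base case, exploiting that $\lesssim_0$ only has to track instantaneous reductions $\actarrow{}$ on systems frozen at time $0$. As candidate relation $\mathcal{S}$ I take the set of all pairs of $\uparrow$-consistent systems (Definition~\ref{ciao}) whose common time is $0$, relating a system under curse $\Delta$ with its counterpart under $\uparrow$. The base case is immediate: the two components of the pair we care about are literally the same system $\R$, and $\R$ is $\uparrow$-consistent with itself by taking the shared part to be $\R$ and $\R_u=\R_d=\R_f=\emptyset$ in Definition~\ref{ciao}; hence this pair lies in $\mathcal{S}$. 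It then remains only to check that $\mathcal{S}$ is a weak barbed simulation in the sense of Definition~\ref{simulation}, with reductions restricted to $\actarrow{}$.

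Both clauses of Definition~\ref{simulation} are discharged directly against the two properties of $\uparrow$-consistency already established. The transition clause is supplied by Lemma~\ref{ucon1}: any instantaneous reduction of a related pair is matched by a single $\actarrow{}$ reduction whose residuals are again $\uparrow$-consistent, so $\mathcal{S}$ is closed under matching moves and a single step is a legitimate witness for the weak ($\reducesStar{}$) matching that the definition demands. The barb clause is supplied by Lemma~\ref{ucon2}: the barbs carried over are already present on the matching side, so each barb $\brb$ with $\R\mybarb{\brb}$ is matched without any further reduction. Neither clause ever advances the clock --- $\actarrow{}$ moves are instantaneous and, by maximal progress (rule $\ParTime$), no time step is enabled while an instantaneous action is --- so we remain throughout inside the time-$0$, communication-only fragment that defines $\lesssim_0$, and the pair built from $(\R,\uparrow)$ and $(\R,\Delta)$ therefore witnesses the required $\lesssim_0$.

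Since Lemmas~\ref{ucon1} and~\ref{ucon2} carry the technical weight, the remaining work is essentially bookkeeping, and the one point I would flag as needing care is the stability of $\uparrow$-consistency under the failure actions: a $\Down$ step is absorbed by enlarging the crashed component $\R_d$ (the node that newly crashes is paired with the still-running node retained in $\R_u$), and a $\MsgLoss$ step is absorbed by transferring the dropped message into the floating component $\R_f$. This is exactly what Lemma~\ref{ucon1} certifies, together with the fact --- recorded in its proof --- that rule $\Up$ cannot fire at time $0$, so crashed nodes are never resurrected and the pairing between the two runs stays coherent. A useful conceptual remark underpinning all of this, worth stating explicitly, is that at time $0$ no message can ever be delivered by $\Sched$: a freshly sent message carries the latency constant $L\geq 1$, and $\Sched$ fires only on latency-zero (floating) messages, so the shared component $\R$ of an $\uparrow$-consistent pair is never perturbed by a delivery that happens in one run but not the other, and all divergence between the two runs stays confined to the bookkeeping components $\R_u$, $\R_d$, and $\R_f$.
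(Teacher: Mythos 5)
Your proposal is essentially the paper's own proof: a coinduction over the relation of time-$0$ $\uparrow$-consistent pairs, seeded by the observation that $\R$ is $\uparrow$-consistent with itself (shared part $\R$, all other components empty), with the transition clause discharged by Lemma~\ref{ucon1} and the barb clause by Lemma~\ref{ucon2}, exactly as the paper does. The only point worth noting is that --- again exactly like the paper's proof --- what this argument literally establishes is the direction delivered by Lemmas~\ref{ucon1} and~\ref{ucon2}, namely that moves and barbs of the \emph{cursed} side are matched by the \emph{uncursed} side, i.e.\ $(\R,\Delta)\lesssim_{0}(\R,\uparrow)$, which read against Definition~\ref{simulation} is the converse of the displayed formula; this direction inconsistency is the paper's own (compare the lemma statement with its use as ``$(\R,\Delta)\lesssim(\R,\csup)$'' in the proof of Theorem~\ref{res0rec}), not an error you introduced.
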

\begin{proof}
By coinduction, observing that initial systems are {$\uparrow$-consistent}, {$\uparrow$-consistency} is preserved by communication actions by Lemma~\ref{ucon1} and ensures that $\R_{\Delta}\mybarb{\brb}$ implies $\R_{\uparrow}\mybarb{\brb}$ by Lemma~\ref{ucon2}.
\end{proof}

We next show an intuitive property that will be useful to show equivalence of resilience and $0$-reliability: $(\R,\Delta)$ and $(\R,\csup)$ are weak barbed bisimilar at time $0$ if no time actions or failures happen (Lemma~\ref{lem:0recoverability}). This is proved by coinduction via Lemma~\ref{lem:0recoverability0} ensuring that actor/node transitions preserve equivalence of barbs in the evolution of cursed and uncursed systems.

\begin{lem}\label{lem:0recoverability0}
If $\stime(\R)=0$, $\R$ is fail-free, and $\actarrow{}$ is an actor/node action then 
\begin{enumerate}
\item $(\R,\Delta)\actarrow{}(\R',\Delta)\Leftrightarrow(\R,\csup)\actarrow{}(\R',\csup)$
\item $(\R,\Delta)\actarrow{}(\R',\Delta) \Rightarrow$ $\R'$ is fail-free. 
\end{enumerate}
\end{lem}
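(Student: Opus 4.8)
The plan is to proceed by induction on the derivation of the actor/node reduction, exploiting the fact that communication actions never advance the clock. The first thing I would record is that every actor/node rule ($\Snd$, $\Sched$, $\Rcv$, $\Checkpoint$, together with the congruence rules $\Rec$, $\ParCom$ and $\Str$) leaves $\stime$ unchanged: none of them is a $\timearrow{}$ step, and $\Snd$ only introduces a latent message stamped with the same time $t=\stime(\R)=0$. Hence $\stime(\R')=0$, which is what makes it meaningful to speak of $\R'$ being fail-free in part (2). The second, and central, observation is that the side conditions of all these rules are exactly of the form $\downf{0}{\nid n}=\csup$ (and, for $\Sched$, additionally $\downf{0}{\nid{n}_2,\nid{n}_1}=\csup$), and that fail-freeness of $\R$ guarantees that $\Delta$ behaves just like $\csup$ on every node and link occurring in $\R$ at time $0$, so that each such side condition holds under $\Delta$ precisely when it holds under $\csup$.

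For part (1) I would establish both implications by a single case analysis on the last rule applied. In the forward direction, if $(\R,\Delta)\actarrow{}(\R',\Delta)$ is derived by a base rule, its premises $\downf{0}{\cdot}=\csup$ hold for $\Delta$; since $\csup$ assigns $\csup$ everywhere by definition, the identical premises hold for $\csup$, so the very same rule instance derives $(\R,\csup)\actarrow{}(\R',\csup)$ with the same $\R'$ (the case $\Checkpoint$ has no side condition and is immediate). In the backward direction the premises hold under $\csup$ trivially, and fail-freeness of $\R$ lets me transport them to $\Delta$: the node (and, for $\Sched$, the link) driving the step occurs in $\R$ and is therefore up under $\Delta$ at time $0$. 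The inductive cases $\Rec$, $\ParCom$ and $\Str$ then follow directly from the induction hypothesis, noting that structural congruence and parallel context change neither the time nor the fail-freeness status of the active subsystem.

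For part (2) I would combine the time-preservation remark with the fact that an actor/node action cannot introduce any cursed component: no $\Down$ rule is used, so $\R'$ contains exactly the nodes of $\R$ and no crashed node $\nodecp{n}{\csdown}{\emptyset}{0}{Q}$ is created, while any link carrying a message in $\R'$ is a link at time $0$, on which $\Delta$ is up. The step I expect to be the main obstacle is the $\Snd$ case, since it is the only rule that enlarges the system, adding a fresh latent message $L.\PM{n}{n_j}{\msg_j}{0}$ on the link $(\nid n,\nid n_j)$; to conclude that $\R'$ is still fail-free I must know that this link is not cursed at time $0$. This is exactly the point where the precise formulation of fail-freeness -- namely that $\Delta$ leaves \emph{every} node and link untouched at the current time, rather than merely those already materialised in $\R$ -- is what carries the argument through, and I would make that reliance explicit.
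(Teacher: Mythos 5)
Your proposal is correct and takes essentially the same route as the paper's proof: induction on the derivation with case analysis on the last rule applied, the base cases being $\Snd$, $\Sched$, $\Rcv$, $\Checkpoint$ and the inductive cases $\Rec$, $\ParCom$, $\Str$. Where the paper dismisses the base cases as ``mechanical'', you additionally make explicit the one delicate point it leaves implicit---that fail-freeness must be read as $\Delta$ agreeing with $\csup$ on \emph{all} nodes and links at time $0$ (in particular the link freshly used by $\Snd$), which is exactly what the backward implication of (1) and claim (2) require.
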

\begin{proof}
By induction on the derivation proceeding by case analysis on the last rule used. The base cases, for rules [Snd], [Sched], [Rcv], and [Checkpoint], are mechanical. The inductive cases for rules [Rec], [Str], and [ParCom], are straightforward by inductive hypothesis. 
\end{proof}

\begin{lem}\label{lem:0recoverability}
Let $\approx_{0}^{\uparrow}$ be the restriction of $\approx$ obtained considering only actor/node actions in Figure~\ref{fig:reduction} (i.e., no failure and no time-consuming actions) and $\stime(\R)=0$ with $\R$ fail-free. It holds that 
\[(\R, \Delta)\approx_{0}^{\uparrow}(\R,\uparrow)\]
\end{lem}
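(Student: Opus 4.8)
The plan is to exhibit an explicit candidate relation and verify, by coinduction, that it is a weak barbed bisimulation once transitions are restricted to actor/node actions. Concretely, I would take
\[\mathcal{S} = \{((\R,\Delta),(\R,\uparrow)) \mid \stime(\R)=0 \text{ and } \R \text{ is fail-free}\}\]
and show that both $\mathcal{S}$ and $\mathcal{S}^{-1}$ satisfy the two clauses of \Cref{simulation} under the restriction to $\actarrow{}$ actor/node actions. Since the pair we are asked to relate, $(\R,\Delta)$ and $(\R,\uparrow)$, shares the same underlying system $\R$ and $\R$ is assumed fail-free with $\stime(\R)=0$, it belongs to $\mathcal{S}$, so establishing that $\mathcal{S}$ is a bisimulation yields the thesis immediately.

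For the transition-matching clause, suppose $((\R,\Delta),(\R,\uparrow))\in\mathcal{S}$ and $(\R,\Delta)\actarrow{}(\R',\Delta)$ via an actor/node action. By \Cref{lem:0recoverability0}(1) the very same reduction is available on the uncursed side, i.e.\ $(\R,\uparrow)\actarrow{}(\R',\uparrow)$ with an identical target $\R'$. By \Cref{lem:0recoverability0}(2) the system $\R'$ is again fail-free, and since the actor/node rules ($\Snd$, $\Sched$, $\Rcv$, $\Checkpoint$, $\Rec$) never increment a node clock we still have $\stime(\R')=0$; hence $((\R',\Delta),(\R',\uparrow))\in\mathcal{S}$ and the move is matched in a single step. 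The argument for $\mathcal{S}^{-1}$ is symmetric, using the converse direction of the biconditional in \Cref{lem:0recoverability0}(1). For the barb-matching clause the key observation is that $\R\mybarb{\brb}$ is defined (\Cref{barb}) purely on the structure of $\R$ and is wholly independent of the curse, so $(\R,\Delta)\mybarb{\brb}$ holds exactly when $(\R,\uparrow)\mybarb{\brb}$ holds; every barb of one side is therefore already a barb of the other and is matched with zero reduction steps.

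There is no genuine obstacle remaining, because \Cref{lem:0recoverability0} has already absorbed the delicate content, namely that at time $0$ and without failures the cursed and uncursed reductions coincide and preserve fail-freeness. Given this, the bisimulation game collapses to checking that $\mathcal{S}$ is closed under these coinciding actor/node actions and that barbs agree, both of which are direct. The only points demanding (routine) care are confirming that each actor/node rule leaves the system time at $0$, so the invariant $\stime(\R)=0$ is maintained, and keeping in mind that $\approx_{0}^{\uparrow}$ quantifies only over actor/node actions, so that failure and time-consuming transitions need never be matched in the game.
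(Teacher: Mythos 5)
Your proof is correct and follows essentially the same route as the paper: a coinductive argument whose entire technical content is delegated to Lemma~\ref{lem:0recoverability0}, which gives the exact correspondence of actor/node reductions and preservation of fail-freeness, plus the observation that barbs do not depend on the curse. The paper states this in one sentence; you have merely made the candidate relation and the closure checks explicit, which is the same proof in expanded form.
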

\begin{proof}
This lemma holds by coinduction, observing that $\stime(\R)=0$ and hence, by Lemma~\ref{lem:0recoverability0}, $(\R,\Delta)\actarrow{}(\R',\Delta)$ if and only if $(\R,\csup)\actarrow{}(\R',\csup)$ with $\R'$ fail-free. 
\end{proof}

We are not able to state the main results: equivalence of resilience and $0$-recoverability. 

\begin{thm}[$0$-recoverability and resilience]\label{res0rec}
An initial cursed system $(\R,\Delta)$ is resilient if and only if it is $0$-recoverable. 
\end{thm}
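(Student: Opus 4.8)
The plan is to prove the two implications separately. For the forward implication (resilience $\Rightarrow$ $0$-recoverability) I argue directly from the definitions. If $(\R,\Delta)$ is resilient then $(\R,\uparrow)\approx(\R,\Delta)$, hence $(\R,\Delta)\approx(\R,\uparrow)$ by symmetry of $\approx$. By Definition~\ref{nentry} with $n=0$ the system $(\R,\Delta)$ is its own unique $0$-entry, so every $0$-path has the form $(\R,\Delta)=(\R_1,\Delta)\rightharpoonup^*(\R_m,\Delta)$ with $\R_1=\R$; choosing $i=1$ in Definition~\ref{recoveranew} gives $(\R_1,\Delta)=(\R,\Delta)\approx(\R,\uparrow)$, so this implication needs no extra machinery.

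For the backward implication ($0$-recoverability $\Rightarrow$ resilience) I fix an initial, $0$-recoverable $(\R,\Delta)$ and aim to exhibit a single weak barbed bisimulation relating $(\R,\uparrow)$ and $(\R,\Delta)$. The starting point is a bisimilar state supplied by $0$-recoverability: by maximal progress and non-Zenoness (Proposition~\ref{nonzeno}) the instantaneous actions out of $(\R,\Delta)$ terminate, so at least one $0$-path exists, and on it there is a state with $(\R,\Delta)\rightharpoonup^*(\R_i,\Delta)$ and $(\R_i,\Delta)\approx(\R,\uparrow)$; since $\rightharpoonup$ is an instance of $\reduces{}$ we have $(\R,\Delta)\reducesStar{}(\R_i,\Delta)$.

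The bisimulation is then assembled from two ingredients. First, a closure property of weak barbed simulation, which I prove by a short coinduction: if $Q'\reducesStar{}Q$ and $P\lesssim Q$ then $P\lesssim Q'$, since any barb or move of $P$ answered by $Q$ via $Q\reducesStar{}\cdots$ is answered by $Q'$ after prefixing $Q'\reducesStar{}Q$; applied to $(\R,\uparrow)\lesssim(\R_i,\Delta)$ (from the bisimilarity) and $(\R,\Delta)\reducesStar{}(\R_i,\Delta)$ it yields $(\R,\uparrow)\lesssim(\R,\Delta)$. Second, $\uparrow$-consistency: initial systems are $\uparrow$-consistent, and by Lemmas~\ref{ucon1} and~\ref{ucon2} (packaged as Lemma~\ref{lem:0recoverability2}) it is a weak barbed simulation on the instantaneous time-$0$ fragment, giving the reverse inclusion there. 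I would glue these by taking the candidate relation to contain the $\uparrow$-consistent time-$0$ pairs and, at each point where a time action becomes enabled --- which by maximal progress is a $0$-path endpoint --- handing over to the bisimilarity $(\R_i,\Delta)\approx(\R,\uparrow)$ that $0$-recoverability guarantees on that path, and checking that the resulting relation and its inverse are both simulations.

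I expect the hand-over to be the main obstacle, for two reasons. First, $\uparrow$-consistency is genuinely lost once time passes: a crashed node in $\R_\Delta$ may be resumed from its checkpoint and thereafter run a different process from its never-failed counterpart in $\R_\uparrow$, so after the clock advances the two systems are no longer structurally related and one cannot continue with Lemma~\ref{lem:0recoverability2}. Second, resilience is full bisimilarity, so it is not enough to produce the two inclusions $(\R,\uparrow)\lesssim(\R,\Delta)$ and $(\R,\Delta)\lesssim(\R,\uparrow)$ independently --- mutual similarity does not imply bisimilarity --- and the witnessing relations must be fused into one relation closed under the bisimulation game in both directions. The crux is to show that along every $0$-path the $\uparrow$-consistent time-$0$ states are related by full (not merely time-$0$) bisimilarity to the uncursed system; this is precisely what $0$-recoverability delivers through $(\R_i,\Delta)\approx(\R,\uparrow)$, but aligning the endpoint at which time elapses with this bisimilar state, and verifying that the reduction manipulations remain sound in the time-abstract weak barbed setting (where a single challenge may be answered by arbitrarily many reductions on the other side), is where the real care is required.
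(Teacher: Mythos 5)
You take the same route as the paper: the forward implication is the paper's ``only if'' case (the initial state is its own unique $0$-entry, so $i=1$ works in Definition~\ref{recoveranew}), and the backward implication rests on the same two ingredients, namely running the cursed system ahead to the bisimilar state that $0$-recoverability supplies (your closure property is a cleaner packaging of the paper's barb-chasing argument for $(\R,\Delta)\gtrsim(\R,\csup)$) and $\uparrow$-consistency via Lemmas~\ref{ucon1} and~\ref{ucon2} for the reverse inclusion on the instantaneous time-$0$ fragment (your reading of Lemma~\ref{lem:0recoverability2} --- cursed simulated by uncursed --- is the direction its proof actually supports, even though its statement is written the other way around). Where you genuinely depart from the paper is at the final step, and there you are \emph{more} careful than the paper itself: the published proof concludes $(\R,\Delta)\approx(\R,\csup)$ directly from the two one-sided similarities, which is exactly the inference you reject (mutual similarity does not imply bisimilarity), and it cites the time-$0$ lemma as if it yielded full similarity, which is exactly the loss-of-consistency-after-time-passes problem you point out. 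So both obstacles you flag are genuine, and they are left unresolved in the paper's own write-up.

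Your fused-relation plan does close the argument, but one detail of your description needs fixing: the non-$\approx$ pairs should \emph{not} be the $\uparrow$-consistent pairs $((\R'',\Delta),(\R''_{\uparrow},\uparrow))$. The bisimilarity that $0$-recoverability hands you is anchored at the \emph{initial} uncursed state, $(\R'',\Delta)\approx(\R,\uparrow)$, and there is no reason why an evolved counterpart $(\R''_{\uparrow},\uparrow)$ should itself be bisimilar to $(\R,\uparrow)$, so the game from such a pair cannot be closed at handover. Instead keep the uncursed component stationary: let $\mathcal{B}$ be $\approx$ together with all pairs $((\R'',\Delta),(\R,\uparrow))$ such that $\R''$ is reached from $\R$ by instantaneous time-$0$ actions along a path on which no strictly earlier state is bisimilar to $(\R,\uparrow)$. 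Cursed instantaneous moves are answered by idling; cursed barbs are answered because $\R''$ has an $\uparrow$-consistent counterpart reachable from $(\R,\uparrow)$ (Lemmas~\ref{ucon1} and~\ref{ucon2}), consistency being used only here and not to define the relation; uncursed moves and barbs are answered by extending the current path to a maximal instantaneous one (finite by Proposition~\ref{nonzeno}), on which $0$-recoverability places a state bisimilar to $(\R,\uparrow)$ at or after $\R''$ thanks to the bookkeeping, and then appealing to $\approx$; and if a time action is enabled at $\R''$ then the path to $\R''$ is itself a $0$-path, so the bookkeeping forces $(\R'',\Delta)\approx(\R,\uparrow)$ and the pair is absorbed into $\approx$ --- the ``alignment'' you worried about is automatic. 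Both $\mathcal{B}$ and $\mathcal{B}^{-1}$ are then weak barbed simulations containing the required pair, which completes your proof and, incidentally, repairs the paper's.
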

\begin{proof}
The only if case is immediate since resilience implies the existence of a state, the initial one, such that $(\R,\uparrow) \approx (\R,\Delta)$. For the if case, we assume $(\R,\Delta)$ to be \emph{$0$-recoverable}: for all path of executions of $(\R,\Delta)$ at time $0$ (i.e., $0$-paths of $(\R,\Delta)$)  there exists a state $(\R',\Delta)$ in that path such that $time(\R')=0$ and $(\R,\uparrow) \approx (\R',\Delta)$. Fix a path $(\R,\Delta)\reducesStar{}(\R',\Delta)$.
Consider a generic intermediate state $(\R'',\Delta)$ such that $(\R,\Delta)\reducesStar{}  (\R'',\Delta)\reducesStar{} (\R',\Delta)$. The reductions to $(\R'',\Delta)$ can be by either (i) one of the Actor/node actions, or by (ii) one of the instantaneous failure actions ([MsgLoss] or [NodeDown]). 

Observe that, for all $\R$ and $\Delta$, the relation $((\R,\Delta), (\R, \uparrow))$ is a bisimulation if we consider a restriction of the reduction relation that only uses actions that is actor/node actions with no failure and no time-consuming actions (Lemma~\ref{lem:0recoverability}, using the fact that $\R$ is initial and hence fail-free). So, if only actions (i) are possible in the reduction of $(\R,\Delta)$ then $(\R'',\Delta) \approx (\R'',\uparrow)$ by Lemma~\ref{lem:0recoverability}. If there are only (i) actions at time $0$, since any state reachable from $(\R,\Delta)$ at time $0$ is bisimilar to the corresponding state reached by $(\R,\uparrow)$ then $(\R, \Delta)$ is resilient. Hence done. 

The argument proceeds similarly in case of (ii) actions that preserve the corresponding barbs of the system with $\Delta$ and the one with $\csup$. Assume therefore that, possibly after a number of barb-preserving reductions by (i) or (ii), the system with $\Delta$ and the one with $\csup$ reach states $(\R'',\Delta)$ and $(\R''',\csup)$, respectively, where $\R''$ and $\R'''$ have different barbs. By Lemma~\ref{ucon2} it can only be the case $\R''\not\mybarb{\brb}$ and $\R''' \mybarb{\brb}$. By hypothesis we have that $(\R'',\Delta)\actarrow{}^\ast(\R',\Delta)\approx(\R,\uparrow)$. We also know that $(\R,\uparrow)\actarrow{}^\ast(\R''',\csup)$ with $\R'''\mybarb{\brb}$. Hence $(\R',\Delta)$ can reach a state that has barb $\brb$ and is bisimilar to $(\R''',\csup)$. This shows $(\R,\Delta)\gtrsim(\R,\csup)$. The fact that $(\R,\Delta)\lesssim(\R,\csup)$ by Lemma~\ref{lem:0recoverability2} yields the thesis $(\R,\Delta)\approx(\R,\csup)$. 
\end{proof}

\subsection{Recoverability for checkpointing systems}
\label{sec:recoverability-cp}

The definition of recoverability in the previous section formalises a system restarting from the initial state, and does not capture checkpointing systems that recover to intermediate states. In this section we add definition of bisimulation up to a particular time, and also a notion of $n$-recoverability for checkpointing systems. This is illustrated with an example of a system that is not $n$-recoverable but that is $n$-checkpoint recoverable.

We introduce a notion of weak barbed simulation up to $n$ where $n$ is a relative time, up to which we want to compare behaviour (ignoring what happens afterwards). 

\medskip
\noindent
\begin{defi}[Weak barbed simulation up to $n$]\label{sim-to-n}
Recall $\reduces{}\in \{\actarrow{}, \timearrow\}$.
A weak barbed simulation up to $n$ is a set of binary relations $\simu^r$ for $r\leq n$  between cursed systems such that: 
\begin{enumerate}
\item $r=0$: $(\R_1,\Delta_1) \simu^0 (\R_2,\Delta_2)$ for all $\R_1$, $\Delta_1$, $\R_2$ and $\Delta_2$;
\item $r>0$ and $(\R_1,\Delta_1) \simu^r (\R_2,\Delta_2)$ implies:
\begin{enumerate}
\item If $(\R_1,\Delta_1)\reduces{} (\R_1',\Delta_1)$ and $s = time(\R_1') - time(\R_1) \leq r$, then there exists $\R_2'$ such that $(\R_2,\Delta_2) \reducesStar{}(\R_2',\Delta_2)$ and $(\R_1',\Delta_1) \simu^{(r-s)} (\R_2',\Delta_2)$.
\item If $\R_1 \mybarb{\brb}$ for some $\brb$, then there exists $\R_2'$ such that $(\R_2,\Delta_2) \reducesStar{} (\R_2',\Delta_2)$ and $\R_2'\mybarb{\brb}$.
\end{enumerate}
\end{enumerate}
\noindent We say $(\R_1,\Delta_1)$ is weak barbed similar to $(\R_2,\Delta_2)$ up to $n$, written $(\R_1,\Delta_1) \lesssim^n (\R_2,\Delta_2)$, if there exists some weak barbed simulation up to $n$, $\simu^n$, such that $(\R_1,\Delta_1) \simu^n (\R_2,\Delta_2)$. By point (1), weak barbed simulation up to $0$ holds for all pairs of systems, whereas weak barbed simulation is morally equivalent to weak barbed simulation up to $\infty$.
 \end{defi}

\begin{defi}[Weak barbed bisimulation up to $n$]\label{bisim-to-n}
We say that $\simu^n$ is a weak barbed bisimulation up to $n$ if $\simu^n$ and ${\simu^n}^{-1}$ are weak barbed simulations up to $n$. We say that $(\R_1,\Delta_1)$ and $(\R_2,\Delta_2)$ are weak barbed bisimilar up to $n$, written  $(\R_1,\Delta_1) \approx^n (\R_2,\Delta_2)$, if there exists some weak barbed bisimulation up to $n$, $\simu^n$, such that $(\R_1,\Delta_1) \simu^n (\R_2,\Delta_2)$.
\end{defi}
 
It is a straightforward consequence of these definitions that if two systems are (bi-)similar up to $n$ then they are (bi-)similar up to $r$ for any $r<n$, and two systems are (bi-)similar  if and only if they are (bi-)similar up to $n$ for all $n$.

We now define a variant of $n$-recoverability that, based on weak barbed bisimulation up to $n$, aims to characterise recoverability for systems that use checkpoints to recover from failures. 

\begin{defi}[$n$-checkpoint-recoverable]\label{n-rec-plus}
For all $\R''$ reachable from $(\R,\csup)$ such that $\stime(\R'')<n$ and $(\R,\csup) \approx^{\stime(\R'')}(\R,\Delta)$ there exists  $\R'$ such that $\stime(\R')<n$ and $(\R,\Delta)\reducesStar{}(\R',\Delta)$ and $(\R',\Delta)\approx(\R'',\csup)$.
\end{defi}

Informally, a cursed system $(\R,\Delta)$ that behaves correctly up to a state at time $t$, can always reach a later state $(\R',\Delta)$ which is bisimilar to the state $(\R'',\csup)$. Suppose that $(\R,\csup) \approx^t(\R,\Delta)$, where $t\leq n$. Then for any $\R''$ reachable from $(\R,\csup)$ such that $\stime(\R'')=t$ there exists $\R'$ that by time $n$ displays the remaining behaviour of the correct (uncursed) system, that is $(\R'',\csup)$. In contrast to Definition~\ref{recoveranew} this definition does not require that $(\R',\Delta)$ exhibits the complete behaviour of $(\R,\csup)$ but only the behaviour after a certain point, for example from a checkpoint onwards. In contrast to Definition~\ref{recoveranew}, Definition~\ref{n-rec-plus} does not require the recovered system $(\R' ,\Delta)$ to exhibit the complete behaviour of the uncursed system since its initial state (i.e., have the same behaviour of $(\R,\uparrow)$) but only the behaviour of $(\R,\uparrow )$ after a certain point (i.e., $(\R'',\uparrow)$) for example from a checkpoint onwards.

\begin{exa}[$n$-checkpoint-recoverability]\label{ex:checkpoint}
	Consider a variant $\R_{TT}$ of $\R_T$ from Example~\ref{ex:redundance}, latency of $1$ time unit, and $\Delta$ that curses node \(\mathtt{p}\) to go down at time 2: 
\[\begin{array}{lll}
\R_{TT} =   & \node{p}{?\mathtt{order}.\savek.\sleep{}.!\mathtt{c}\,\mathtt{item}.\, ? \mathtt{ok}\, . 0 \after\, \color{red}3\color{black} ~  0 }{\emptyset}{0}\parallel
\\
&\node{c}{!\mathtt{p}\,\mathtt{order}.?\mathtt{item}.\ !\mathtt{p}\,\mathtt{ok}. 0 \after\, \color{red}3\color{black} ~  (!\mathtt{p}\,\mathtt{failed}.?\mathtt{item}.\ !\mathtt{p}\,\mathtt{ok}. 0 \after\, \color{red}5\color{black} ~  0)}{\emptyset}{0}
\end{array}
\]

	System $\R_{TT}$ used checkpointing to restart from an intermediate state in the event of failure. In the case of the node \(\mathtt{p}\) to going down at time 2, the system reduces in a number of steps to:
	\[\begin{array}{lll}
		\R'_{TT} = & \node{p}{\sleep{}.!\mathtt{c}\,\mathtt{item}.\, ? \mathtt{ok}\, . 0 \after\, \color{red}3\color{black} ~  0 }{\emptyset}{3}\parallel
		\\
		&\node{c}{?\mathtt{item}.\ !\mathtt{p}\,\mathtt{ok}. 0 \after\, \color{red}5\color{black} ~  0)}{\emptyset}{3} \parallel 1.\PM{c}{p}{\mathtt {failed}}{3}
	\end{array}
	\]
	It is this state that makes system $\R_{TT}$ not $n$-recoverable with respect to $\Delta$ both because node \(\mathtt{c}\) sends message $\mathtt{failed}$ in its timeout process and the communication of the \(\mathtt{order}\) message does not get repeated when the rest of interaction is repeated. While the $\mathtt{failed}$ message is not read by \(\mathtt{p}\), it adds an additional barb to the cursed system $(\R_{TT},\Delta)$ that is not matched by the uncursed system $(\R_{TT},\csup)$. 
			
	The system is however $n$-checkpoint-recoverable with $n = 4$, $\R'_{TT}$ further reduces to 
	\[\begin{array}{lll}
		& \node{p}{!\mathtt{c}\,\mathtt{item}.\, ? \mathtt{ok}\, . 0 \after\, \color{red}2\color{black} ~  0 }{\mathtt{failed}}{4}\parallel \node{c}{?\mathtt{item}.\ !\mathtt{p}\,\mathtt{ok}. 0 \after\, \color{red}4\color{black} ~  0)}{\emptyset}{4}
	\end{array}
	\]
	from which state the cursed system exhibits the behaviour of the uncursed system from time 2 (or from the checkpoint) onwards.
\end{exa}

\subsection{Fault-tolerance: a more general perspective}
\label{sec:fault}

In \cite{10.1145/311531.311532}, the author gives a theoretical definition of the problem of fault tolerance along two dimensions: safety (the system does not reach bad states, although it can possibly stop due to faults) and liveness (the system eventually reaches good states, hence in case of bad behaviour it eventually recovers). In this context, the guarantee of both safety and liveness is called \emph{masking fault-tolerance}, of only safety is called \emph{fail-safe fault-tolerance}, and of only liveness is called \emph{non-masking fault-tolerance}.

In our framework, we can characterise these three kinds of fault-tolerance by using our simulation relation: 
\begin{itemize}
\item $(\R,\Delta)\approx (\R,\uparrow)$ - \emph{masking fault-tolerance}: $\R$ cursed by $\Delta$ has all and only the behaviour of healthy system $\R$. 
\item $(\R,\Delta)\lesssim (\R,\uparrow)$ - \emph{fail-safe}: $\R$ cursed by $\Delta$ has only the behaviour of healthy system $\R$. 
\item $(\R,\Delta)\gtrsim (\R,\uparrow)$ - \emph{non-masking fault tolerant}: $\R$ cursed by $\Delta$ has only the behaviour of healthy system $\R$. 
\end{itemize}

Resilience, given in Definition~\ref{def:resilience}, corresponds to the safety and liveness combination of masking fault-tolerance. We have shown in Example~\ref{ex:redundance} that masking-failure can be attained by using space redundancy (e.g., replication of nodes as in the multiple producers scenario) and time redundancy (e.g., retry-strategies). In fact, the author in \cite{10.1145/311531.311532} substantiates that redundancy is a necessary condition for fault tolerance. 
Fail-safe fault tolerance, while easier to attain, is not the most desirable property in many real-world scenarios: a systems that just stops to prevent `bad' actions, may not be a suitable model when you want eventually consistency despite perturbations to the ideal course of actions. 
Intuitively, both fail-safe fault tolerance and non-masking fault tolerance for cursed system can be expressed by using the notion of weak barbed simulation given in Definition~\ref{simulation}, as shown above.  
The formulation of \emph{non-masking fault tolerant} as $(\R,\Delta)\gtrsim (\R,\uparrow)$ is very general. In principle, this definition consider fault tolerant any system that performs an infinite sequence of actions among which, sometimes, a correct action happens to make the system progress. Practically we would want to see that the behaviour added is not random, but follows a sensible pattern of restart, reset or another benign behaviour. In this paper we have put most emphasis on non-masking fault tolerance, but focussing on more stringent definitions of non-masking fault-tolerance: some unforeseen sequence of actions may be visible at some point, but after some recovery actions, at a time that is not later than $n$, the system will revert to the required behaviour by restarting from the beginning ($n$-recoverability for reset systems) or from the point of failure ($n$-recoverability for checkpointing systems). These definitions are intentionally non-general, with the aim of capturing known recovery patterns. We leave as a future work the extension of $n$-recoverability to cater for periodic failures. 

A similar approach, of characterising masking/fail-safe/non-masking fault tolerance using simulation was followed by 
 \cite{10.1007/s00165-017-0426-2} but with a clear distinction of good versus faulty states (using coloured Kripke structures). More on the relationship with  \cite{10.1007/s00165-017-0426-2} is discussed in Section \ref{sec:related}.

\section{Augmentation of cursed systems}
\label{sec:augmentation}

Augmentation of a cursed system is the result of adding or modifying some behaviour in the initial system to improve the system's ability of handling failures. The following definition applies to reset systems as it is based on $n$-recoverability. A corresponding notion of augmentation could be given for checkpointing systems by using, in Definition~\ref{def:augmentation}, $n$-checkpoint-recoverability instead of $n$-recoverability. In the remaining of this section we focus on reset systems.

\begin{defi}[Augmentation]\label{def:augmentation}
$\R_{\mathtt I}$ is an augmentation of $\R$ if $time(\R_{\mathtt I}) = time(\R)$ and:
\begin{description}
\item[i) transparency] 	$(\R,\uparrow) \approx (\R_{\mathtt I},\uparrow)$
\item[ii) improvement] there exist $\Delta$ and $n$ such that $(\R_{\mathtt I},\Delta)$ is $n$-recoverable and $(\R,\Delta)$ is not $n$-recoverable.
\end{description}
Moreover, we say that an augmentation is \emph{preserving} if, for all $n$ and $\Delta$,
$(\R,\Delta)$ is $n$-recoverable implies $(\R_{\mathtt I},\Delta)$ is $n$-recoverable.
\end{defi}

\begin{exa}[Augmentation]\label{ex:augmentation}
Consider the small producer-consumer system $\R$ below, composed of a producer node $\nid{n_p}$, a queue node $\nid{n_q}$, and a consumer node $\nid{n_c}$. The producer recursively sends items to the queue and sleeps for a time unit. The queue expects to receive an item within three time units that then gets sent to the consumer. In case of a timeout the queue loops back to the beginning and awaits an item from the producer. The consumer recursively receives items from the queue. We fix the latency of the system to $L = 1$.
\begin{align*}
\R = \, & \nid{n_q}[\rec{t} \rcv {item. \sleepk. \snd \nid{n_c} \, item.\rvar{t}}
		\after{3} \rvar{t} ]{\color{blue}(\emptyset)}{\color{purple}(0)} \pll
		\\
     &\node{n_{p}}{\rec{t} \snd \nid{n_q} \, item. \sleepk.\rvar{t}}{\emptyset}{0}
	  \pll
	  \node{n_c}{\rec{t} \rcv {item.\sleepk.\rvar{t}} \after{4} \rvar{t}}{\emptyset}{0}
\\[0.2cm]
\R_{\mathtt I} = \, & \R \pll
\node{n_{p'}}{\rec{t} \snd \nid{n_q} \, item. \sleepk.\rvar{t}}{\emptyset}{0}
\end{align*}
The augmented producer-consumer $\R_{\mathtt I}$ adds behaviour to the system by having a second producer node $\nid{n_{p'}}$.
$\R_{\mathtt I}$ improves the resilience to a producer node or its link failing or being slow. For example the curse function $\Delta(\nid{n_p})$ injecting node delay for the producer node between time 1 and 3 and $\csup$ otherwise impacts the first system $\R$ but not its augmented counterpart $\R_{\mathtt I}$. $\R$ is $4$-recoverable while $\R_{\mathtt I}$ is $0$-recoverable. Moreover, $\R_{\mathtt I}$ preserving augmentation of system $\R$.
\end{exa}

\subsection{Augmentation with scoped barbs}
\label{scopes}
Augmentations often need to introduce additional behaviour into actors. One may want to disregard part of `behind the scenes' augmentation when comparing the behaviour of cursed systems using the relation in Definition~\ref{bisimulation}.
For simplicity, instead of adding scope restriction to the calculus, we extend barbs with scopes to hide behaviour of some nodes or links.
With mailboxes, all interactions to a node are directed to the one mailbox. Defining scope restriction only on node identifiers would be less expressive than scope restriction based on channels, e.g., it would not be possible to hide specific communications to a node, while in channel-based calculi one can use ad-hoc hidden channels.
To retain expressiveness, we define scope restriction that takes into account \emph{patterns} in the communication
between nodes.

\begin{defi}[Scoped barb]\label{def:scoped_barb}
Let $N$ be a finite set of elements of the form $!\,  \nid n \, p$ or $?\,  \nid n \, p$ where $\nid n\in \nodes$ and $p$ is a pattern. $\R\sbarb \brb $ if: (1) $\R\mybarb \brb $, (2) $\brb \not \in N$, and (3) if $\brb = \snd \nid n\,  m$ then for all $\snd \nid n \,  p\in N$,  $\nmatch{p}{m}$.
If $\R \sbarb{\brb}$ we say that $\R$ has a $N$-scoped \emph{barb} on $\brb$.
\end{defi}

We extend Def.~\ref{bisimulation} using $\sbarb{}$ instead of $\mybarb{}$, obtaining scoped weak-barbed bisimulation $\approx_N$, and Def.~\ref{def:augmentation} to use $\approx_N$.
This setting allow us to analyse producer consumer scenarios, or more complex ones, like the Circuit Breaker pattern~\cite{nygard2018release} widely used in distributed systems.

\begin{exa}[Circuit breaker]\label{cb} Consider system \((\R, \Delta)\) with a client $\nid{n_c}$ and a service $\nid{n_s}$, and its augmentation $\R_{\mathtt I}$ with a circuit breaker running on node $\nid{n_{s}}$:
\[
\begin{array}{ll}
	\R = & \node{n_{c}}{\rec{t} \snd {\nid{n_s} \, request}. \rcv {reply. \sleepk.\rvar{t}} \after{4} \inact}{\emptyset}{0}
	\pll
	\\
	& \node{n_s}{\rec{t} \rcv request. \sleepk. \snd \nid{n_c} \, reply.\rvar{t} \after{4} \rvar{t}}{\emptyset}{0}
	\\[0.3cm]
\R_{\mathtt I} = &
	\node{n_{c}}{\rec{t} \snd {\nid{n_{s}} \, request}.
		\rcv
		\setof{reply. \sleepk.\rvar{t}, \, ko. P_f}
		\after{8} \inact
	}{\emptyset}{0} \pll
	\\
	&  \node{n_{s}}{ \rec{t} \rcv {X_1}. ! \nid{n_1} \, X_1.  \rcv X_2. ! {\nid{n_c}\, X_2}. \rvar{t}
		\after{4} P'_f \after{4} \rvar{t}}{\emptyset}{0} \pll
	\\
	\multicolumn{2}{l}{ \quad \node{n_1}{
			\rec{t} \rcv
			\setof{
				request. \sleepk. ! \nid{n_{s}} \, reply.\rvar{t}, \,
				ruok. \sleepk. ! \nid{n_{s}} \, imok. \rvar{t}
			}
			\after{6} \rvar{t}
		}{\emptyset}{0}
	} \\[0.2cm]
	P_f = & \rec{t'} \rcv retry. \sleepk. \rvar{t} \after{5} \rvar{t'} \\
	P'_f = & \snd {\nid{n_c}\, ko}.\sleepk. \rec{t'} \snd {\nid{n_s}\, ruok}.
		\rcv {imok. \sleepk. \snd {\nid{n_c}\, retry}.\rvar{t}} \after{3} \rvar{t'}
\end{array}
\]

with a $\Delta(\nid{n_c}, \nid{n_s})$ injecting link slow $\csslow$ at times $1$, $2$, and $3$ and healthy otherwise, and latency to $L = 1$. The impact of failure on the $\R$ makes it unrecoverable, as the link delay cascades to node $\nid{n_c}$.
We augment $\R$ with a circuit breaker process which runs on the previous server node $\nid{n_{s}}$ that monitors for failure, prevents faults in one part of the system and controls the retries to the service node now $\nid{n_{1}}$. The node $\nid{n_{s}}$ forwards messages between nodes $\nid{n_{c}}$ and $\nid{n_{1}}$, and in case of a timeout checks the health of $\nid{n_{s}}$ and tells node $\nid{n_c}$ when it can safely retry the request. When comparing $\R$ and $\R_{\mathtt I}$ for resilience, recoverability or transparency we wish to abstract from the additional behaviour introduced by the circuit breaker pattern for which we use Def.~\ref{def:scoped_barb} with: \(N = \{\snd \nid{n_s} \, ruok, \, \rcv \nid{n_s} \, imok, \, \rcv \nid{n_s} \, reply, \, \rcv \nid{n_1} \, request, \, \snd \nid{n_s} \, reply, \, \rcv \nid{n_1} \, ruok, \, \snd{n_s} \, imok, \) \\
	\(  \snd{n_c} \, ko, \,\snd{n_c} \, retry, \, \rcv{n_c} \, ko, \, \rcv{n_c} \, retry \}\).
This effectively hides the entire behaviour of $\nid{n_1}$ and node $\nid{n_s}$'s health checking behaviour. Using the extended definition we find that for the same curse function system $\R_{\mathtt I}$ is $0$-recoverable. Similarly, for
the curse function delays link $(\nid{n_s}, \nid{n_1}$) at times $1$, $2$, and $3$, $\R_{\mathtt I}$ is $0$-recoverable.
\end{exa}

 \section{Prospective applications}
\label{sec:app}

This work is a first step towards an analysis of mailbox systems with failures and has the purpose of clarifying the problem space. An informal validation of the relevance of the work was attained through interaction with our industry partners, in particular Erlang Solutions Ltd. and Actyx AG, as well as in applying it to a collection of real-world case studies and patterns, such as the circuit breaker in Section \ref{cb}.  

To support analysis and development of real-world systems, we aim to build on the current work. In this section we discuss two potential applications; their development goes beyond the scope of the formal setting given in the current work.

\subsection{Analysis of cursed systems} Encoding the models of failures and systems into verification tools like UPPAAL is fairly straightforward. We provide a prototype encoding here. A straightforward encoding only supports analysis of a system against a specific $\Delta$ `traces', and so, by repeating this analysis, to a limited set of curse cases. 

As a more powerful development, we are working on generalizing the notion of $\Delta$ to a symbolic entity, that can finitely characterise infinite patterns, together with a tractable algorithm to determine simulation that is parametric with respect to this symbolic $\Delta$. This feature would allow it to be determined whether a system model is resilient with respect to a given set of curses, or synthesise the curses that a system can or cannot deal with. 

Code generation or synthesis would, in turn support top-down or bottom-up development, (respectively). Existing approaches to code generation provide seamless links between process-calculi-based models and Erlang code. For example, the tool described in~\cite{DBLP:journals/programming/BocchiOV23}, which presents a proof of concept of a theoretical advance, can generate Erlang $\mathtt{gen\_statem}$ code from a process-calculus specification and extract specifications from Erlang $\mathtt{gen\_statem}$ code. The circular transformation described above is possible thanks to the code structure induced by Erlang $\mathtt{gen\_statem}$ itself, which yields modular code that is structured as a finite state machine and hence has a  straightforward correspondence with its model. A similar approach could be taken to our modelling of failure scenarios by supporting a richer process calculus that includes time and  timeouts.

\subsection{Test support}
A second direction is to use property-based testing (PBT), as implemented by QuickCheck~\cite{quickCheck}, initially for Haskell and Erlang, and subsequently for a variety of other languages. Property-based testing replaces unit tests by tests of logical properties of the system under test (SUT), expressed in a universal fragment of first-order logic. A universal property is evaluated at a randomly generated set of values, and any counter-example is systematically shrunk to a simplest such example, according to some size metric. Successful application of property-based testing therefore depends on three things: being able to express relevant properties of a system in a logical form; being able to generate values from relevant domains in a way that optimises coverage of the domain; and being able to ``shrink'' values in an effective and efficient way. PBT can be seen as a complement to more heavyweight verification approaches: for example, it is worthwhile subjecting a candidate theorem to PBT before embarking on developing a formal proof.

Stateful systems in Erlang~\cite{QuickCheckPulse} and other languages can subject to PBT using state machine models. The state machine provides an abstract model of the system, and is used to guide testing of the SUT: random sequences of transitions of the state machine exercise the SUT, and shrinking simplifies and shortens counter-example traces. 

In the context of the work presented here, QuickCheck can be used to test systems in which failure is modelled explicitly in a state machine model, but could also be extended to include  modelling of the symbolic $\Delta$ function discussed above -- e.g.\ using logical constraints -- and to generate and shrink instances of $\Delta$ with particular properties.

 \section{Conclusion and related work}
\label{sec:related}

We introduced a model for actor-based systems with grey failures and investigated the definition of behavioural equivalence for it. We used weak barbed bisimulation to compare systems on the basis of their ability to recover from faults, and defined properties of resilience, recoverability and augmentation.  We reduced the problem of checking reliability properties of systems to a problem of checking bisimulation. We introduced scope restriction for mailboxes based on patterns, which allows us to model relatively complex real-world scenarios like the Circuit Breaker. 

As further work we plan to extend the recovery function $\pf$ to model check-pointing of intermediate  node states. Note that $\pf$ can already be set as an arbitrary process, but a more meaningful extension would account for the way in which checkpoints are saved. Moreover, we plan to add a notion of intermittent correctness, to model recovery with partial checkpoints rather than re-starting from the initial state, or intermittent expected/unexpected behaviour.
Another area of future work is to use the characteristic formulae approach~\cite{GrafS86a,Steffen89}, a method to compute simulation-like relations in process algebras, to generate formulae for the properties introduced and reduce them to a model checking problem that can be offloaded to a model checker. 

A related formalism to our model is Timed Rebeca~\cite{Aceto_2011}, which is actor-based and features similar constructs for deadlines and delays. Timed Rebeca actors can also use a `\emph{now}' function to get their local times. Extending our calculus with `\emph{now}' and allowing messages to have time as data sort, would allow us to model scenarios e.g., where a node calculates the return-trip time to another node and changes its behaviour accordingly. While Timed Rebeca can encode network delays (adding delays to receive actions -- using a construct called `\emph{after}'), it does not model links explicitly. Explicit links and separation between curses and systems make it easier in our calculus to compare systems with respect to recoverability. Rebeca was encoded in McErlang~\cite{Aceto_2011} and Real-Time Maude~\cite{SABAHIKAVIANI201585} for verification. We have ongoing work on encoding our model in UPPAAL. Our main challenge in this respect is to formalise a meaningful and manageable set of curses to verify the model against.

In~\cite{DBLP:journals/jlp/FrancalanzaH07}, Francalanza and Hennessy introduced a behavioural theory for D$\pi$F, a distributed $\pi$-calculus with with nodes and links failures. For a subset of D$\pi$F, they also developed a notion of fault-tolerance up to $n$-faults~\cite{DBLP:journals/iandc/FrancalanzaH08}, which is preserved by contexts, and which is related to our notion of resilience. 
The behavioural theory in~\cite{DBLP:journals/jlp/FrancalanzaH07} is based on reduction barbed congruence. The idea is to use a contextual relation to abstract from the behaviour of hidden nodes/links, while still observing their effects on the network, e.g., as to accessibility and reachability of other nodes. The scoped barbs in Section~\ref{scopes} have the similar purpose of hiding augmentations while observing their effects on recoverability. However, because of asynchronous communication over mailboxes (while D$\pi$F is based on synchronous message passing), our notion of hiding is less structural (i.e., based on nodes and links) and more application-dependent (i.e., based on patterns). At present, we have left pattern hiding out of the semantics, but further investigation towards a contextual relation that works for hidden patterns is promising future work. 
D$\pi$F studies partial failures but does not consider transient failures and time. On the other hand, D$\pi$F features mobility which we do not support.  
In fact, we rely on the assumption of fixed networks: since our observation is based on patterns (and ignores senders) we opted for relying on a stable structure to simplify our reasoning on what augmentation vs recoverability means, leaving mobility issues for future investigation.

Most ingredients of the given model (e.g., timeouts~\cite{DBLP:conf/fossacs/LaneveZ05,DBLP:conf/aplas/BergerY07,DBLP:conf/wsfm/LopezP11}, mailboxes \cite{MostrousV11}, localities~\cite{ICALP-1997-RielyH}\cite{BergerH00}\cite{Castellani01}) have been studied in literature, often in isolation. We investigated the inter-play of these ingredients, focussing on reliability properties. 
One of the first papers dealing with asynchronous communication in process algebra is by de Boer et
al.~\cite{BoerKP92}, where different observation criteria are studied (bisimulation, traces and
abstract traces) following the axiomatic approach typical of the process algebra
ACP~\cite{BergstraK84}. An alternative approach has been followed by Amadio et al.~\cite{AmadioCS98} who
defined asynchronous bisimulation for the $\pi$-calculus~\cite{MilnerPW92a}. They started from
operational semantics (expressed as a standard labelled transition system), and
then considered the largest bisimulation defined on internal steps that equates
processes only when they have the same observables, and which is closed under
contexts. The equivalence obtained in this way is called barbed congruence~\cite{MilnerS92}.
Notably, when asynchronous communication is considered, barbed congruence
is defined assuming as observables the messages that are ready to be delivered
to a potential external observer.
Merro and Sangiorgi~\cite{MerroS98} have subsequently studied barbed congruence in
the context of the Asynchronous Localised $\pi$-calculus (AL$\pi$), a fragment of the asynchronous $\pi$-calculus in which only output capabilities can be
transmitted, i.e., when a process receives the name of a channel, it can only send
messages along it, but cannot receive on it. Another line of research
deals with applying the testing approach to asynchronous communication; this has been investigated by Castellani and Hennessy~\cite{CastellaniH98} and by Boreale
et al.~\cite{BorealeNP99,BorealeNP02}. These papers consider an asynchronous variant of CCS~\cite{Milner89}. Testing discriminates less than our equivalence, concerning choice, and observes divergent behaviours which we abstract from. 
Lanese et al.~\cite{LaneseSZ19} look at bisimulation for Erlang, focussing on the management of process ids.
Besides the aforementioned work by Francalanza and Hennessy~\cite{DBLP:journals/jlp/FrancalanzaH07,DBLP:journals/iandc/FrancalanzaH08}, several works look at distributed process algebras with unreliable communication due to faults in the underlying network. Riely and Hennessy~\cite{ICALP-1997-RielyH} study behavioural equivalence over process calculi with locations.
Amadio~\cite{Amadio97} extends the $\pi$-calculus with located actions, in the context of a higher-order distributed programming language. Fournet et al.~\cite{FournetGLMR96} look at
locations, mobility and the possibility of location failure in the distributed join calculus. The failure of a location can be detected and recovered from. Berger and Honda~\cite{BergerH00} augment the asynchronous $\pi$-calculus with a timer, locations, message-loss, location failure and the ability to save process state. They define a notion of weak bisimulation over networks. Their model however does not include timeout, link delays, or a way of injecting faults. Cano et al.~\cite{cano2019multiparty} develop a calculus and type system for multiparty reactive systems that models time dependent interactions. Their setting is synchronous and their focus is on proving properties as types safety or input timeliness, while ours is comparing asynchronous systems with faults.

In Section~\ref{sec:fault} we discussed two related works: one characterising fault-tolerance using safety and liveness ~\cite{10.1145/311531.311532}, and one~\cite{10.1007/s00165-017-0426-2} instantiating such characterisation using simulation.
One of the main differences of our work with~\cite{10.1007/s00165-017-0426-2} is the communication model: the work in \cite{10.1007/s00165-017-0426-2} uses coloured Kripke structures, whereas we use an asynchronous process calculus with explicit actor-based features.
Another difference is the model of failure and the characterization of good versus bad states. Usually, fault-tolerance is studied against a well define model of failure and set of failure scenarios. Systems that can recover from arbitrary failures, called \emph{self-stabilizing} \cite{10.1145/311531.311532} are difficult to build and verify. Hence models for foult-tolerant systems normally have a model of failures within. In our case, this would be a set of $\Delta$ functions. In \cite{10.1007/s00165-017-0426-2}  there is an explicit and statical labelling (colouring) of each state as good or bad. In contrast, we observe deviations from the behaviour of the same system but without failures, so that a bad state is actually one that breaks the bisimulation relation with the corresponding uncursed system. 
Finally, in \cite{10.1145/311531.311532}\cite{10.1007/s00165-017-0426-2}, there is no clear distinction between the notion of fault -- a defect of the system -- and its concrete manifestation as a symptom. Our model of failure, that separates system from curses yields a more agnostic view of what a bad state is that supports modular reasoning on the relationship between causes and symptoms.

%\newpage
\bibliographystyle{alphaurl}
\bibliography{main}

\newcommand{\etalchar}[1]{$^{#1}$}
\begin{thebibliography}{CCDGP19}

\bibitem[ACI{\etalchar{+}}11]{Aceto_2011}
Luca Aceto, Matteo Cimini, Anna Ingolfsdottir, Arni~Hermann Reynisson,
  Steinar~Hugi Sigurdarson, and Marjan Sirjani.
\newblock Modelling and simulation of asynchronous real-time systems using
  timed rebeca.
\newblock {\em EPTCS}, 58:1--19, 2011.
\newblock \href {https://doi.org/10.4204/eptcs.58.1}
  {\path{doi:10.4204/eptcs.58.1}}.

\bibitem[ACS98]{AmadioCS98}
Roberto~M. Amadio, Ilaria Castellani, and Davide Sangiorgi.
\newblock On bisimulations for the asynchronous pi-calculus.
\newblock {\em Theor. Comput. Sci.}, 195(2):291--324, 1998.
\newblock \href {https://doi.org/10.1016/S0304-3975(97)00223-5}
  {\path{doi:10.1016/S0304-3975(97)00223-5}}.

\bibitem[Ama97]{Amadio97}
Roberto~M. Amadio.
\newblock An asynchronous model of locality, failure, and process mobility.
\newblock In {\em Proc. COORDINATION}, volume 1282 of {\em LNCS}, pages
  374--391. Springer, 1997.
\newblock \href {https://doi.org/10.1007/3-540-63383-9_92}
  {\path{doi:10.1007/3-540-63383-9_92}}.

\bibitem[APN17]{DBLP:conf/forte/AdameitPN17}
Manuel Adameit, Kirstin Peters, and Uwe Nestmann.
\newblock Session types for link failures.
\newblock In {\em Proc. FORTE}, volume 10321 of {\em LNCS}, pages 1--16.
  Springer International Publishing, 2017.
\newblock \href {https://doi.org/10.1007/978-3-319-60225-7_1}
  {\path{doi:10.1007/978-3-319-60225-7_1}}.

\bibitem[Arm13]{Armstrong13}
Joe Armstrong.
\newblock {\em Programming Erlang: Software for a Concurrent World}.
\newblock Pragmatic Bookshelf, Dallas, TX, 2 edition, 2013.

\bibitem[BBO12]{BasuBO12}
Samik Basu, Tevfik Bultan, and Meriem Ouederni.
\newblock Deciding choreography realizability.
\newblock {\em Proc. {ACM} Program. Lang. (POPL)}, 47:191--202, 2012.
\newblock \href {https://doi.org/10.1145/2103656.2103680}
  {\path{doi:10.1145/2103656.2103680}}.

\bibitem[BGF{\etalchar{+}}21]{BolligGFLLS21}
Benedikt Bollig, Cinzia~Di Giusto, Alain Finkel, Laetitia Laversa,
  {\'{E}}tienne Lozes, and Amrita Suresh.
\newblock A unifying framework for deciding synchronizability.
\newblock In {\em Proc. CONCUR}, volume 203 of {\em LIPIcs}, pages 14:1--14:18.
  Schloss Dagstuhl - Leibniz-Zentrum f{\"{u}}r Informatik, 2021.
\newblock \href {https://doi.org/10.4230/LIPIcs.CONCUR.2021.14}
  {\path{doi:10.4230/LIPIcs.CONCUR.2021.14}}.

\bibitem[BH03]{BergerH00}
Martin Berger and Kohei Honda.
\newblock The two-phase commitment protocol in an extended $\pi$-calculus.
\newblock {\em ENTCS}, 39(1):21--46, 2003.
\newblock \href {https://doi.org/10.1016/S1571-0661(05)82502-2}
  {\path{doi:10.1016/S1571-0661(05)82502-2}}.

\bibitem[BK84]{BergstraK84}
Jan~A. Bergstra and Jan~Willem Klop.
\newblock Process algebra for synchronous communication.
\newblock {\em Inf. Control.}, 60(1-3):109--137, 1984.
\newblock \href {https://doi.org/10.1016/S0019-9958(84)80025-X}
  {\path{doi:10.1016/S0019-9958(84)80025-X}}.

\bibitem[BLTV22]{BLTV22}
Laura Bocchi, Julien Lange, Simon Thompson, and A.~Laura Voinea.
\newblock A model of actors and grey failures.
\newblock In {\em Proc. {COORDINATION}}, volume 13271 of {\em LNCS}, pages
  140--158. Springer, 2022.
\newblock \href {https://doi.org/10.1007/978-3-031-08143-9_9}
  {\path{doi:10.1007/978-3-031-08143-9_9}}.

\bibitem[BNP99]{BorealeNP99}
Michele Boreale, Rocco~De Nicola, and Rosario Pugliese.
\newblock A theory of ``may" testing for asynchronous languages.
\newblock In {\em Proc. FoSSaCS}, volume 1578 of {\em LNCS}, pages 165--179.
  Springer, 1999.
\newblock \href {https://doi.org/10.1007/3-540-49019-1_12}
  {\path{doi:10.1007/3-540-49019-1_12}}.

\bibitem[BNP02]{BorealeNP02}
Michele Boreale, Rocco~De Nicola, and Rosario Pugliese.
\newblock Trace and testing equivalence on asynchronous processes.
\newblock {\em Inf. Comput.}, 172(2):139--164, 2002.
\newblock \href {https://doi.org/10.1006/inco.2001.3080}
  {\path{doi:10.1006/inco.2001.3080}}.

\bibitem[BOV23]{DBLP:journals/programming/BocchiOV23}
Laura Bocchi, Dominic Orchard, and A.~Laura Voinea.
\newblock {A Theory of Composing Protocols}.
\newblock {\em Art Sci. Eng. Program.}, 7(2), 2023.
\newblock \href {https://doi.org/10.22152/programming-journal.org/2023/7/6}
  {\path{doi:10.22152/programming-journal.org/2023/7/6}}.

\bibitem[BY07]{DBLP:conf/aplas/BergerY07}
Martin Berger and Nobuko Yoshida.
\newblock Timed, distributed, probabilistic, typed processes.
\newblock In {\em Proc. APLAS}, volume 4807 of {\em LNCS}, pages 158--174.
  Springer, 2007.
\newblock \href {https://doi.org/10.1007/978-3-540-76637-7_11}
  {\path{doi:10.1007/978-3-540-76637-7_11}}.

\bibitem[BZ83]{10.1145/322374.322380}
Daniel Brand and Pitro Zafiropulo.
\newblock On communicating finite-state machines.
\newblock {\em J. ACM}, 30(2):323--342, 1983.
\newblock \href {https://doi.org/10.1145/322374.322380}
  {\path{doi:10.1145/322374.322380}}.

\bibitem[Cas01]{Castellani01}
Ilaria Castellani.
\newblock Process algebras with localities.
\newblock In {\em Handbook of Process Algebra}, pages 945--1045. North-Holland
  / Elsevier, 2001.
\newblock \href {https://doi.org/10.1016/b978-044482830-9/50033-3}
  {\path{doi:10.1016/b978-044482830-9/50033-3}}.

\bibitem[CCDGP19]{cano2019multiparty}
Mauricio Cano, Ilaria Castellani, Cinzia Di~Giusto, and Jorge~A. P{\'e}rez.
\newblock {Multiparty Reactive Sessions}.
\newblock Research Report 9270, {INRIA}, April 2019.
\newblock URL: \url{https://hal.archives-ouvertes.fr/hal-02106742}.

\bibitem[CDYP16]{DBLP:journals/mscs/CoppoDYP16}
Mario Coppo, Mariangiola Dezani{-}Ciancaglini, Nobuko Yoshida, and Luca
  Padovani.
\newblock Global progress for dynamically interleaved multiparty sessions.
\newblock {\em MSCS}, 26(2):238--302, 2016.
\newblock \href {https://doi.org/10.1017/S0960129514000188}
  {\path{doi:10.1017/S0960129514000188}}.

\bibitem[CGY16]{capecchi_giachino_yoshida_2016}
Sara Capecchi, Elena Giachino, and Nobuko Yoshida.
\newblock Global escape in multiparty sessions.
\newblock {\em MSCS}, 26(2):156--205, 2016.
\newblock \href {https://doi.org/10.1017/S0960129514000164}
  {\path{doi:10.1017/S0960129514000164}}.

\bibitem[CH98]{CastellaniH98}
Ilaria Castellani and Matthew Hennessy.
\newblock Testing theories for asynchronous languages.
\newblock In {\em Proc. FSTTCS}, volume 1530 of {\em LNCS}, pages 90--101.
  Springer, 1998.
\newblock \href {https://doi.org/10.1007/978-3-540-49382-2_9}
  {\path{doi:10.1007/978-3-540-49382-2_9}}.

\bibitem[CH00]{quickCheck}
Koen Claessen and John Hughes.
\newblock {QuickCheck: a lightweight tool for random testing of Haskell
  programs}.
\newblock In {\em Proc. {ICFP}}, pages 268--279. ACM, 2000.
\newblock \href {https://doi.org/10.1145/351240.351266}
  {\path{doi:10.1145/351240.351266}}.

\bibitem[CPS{\etalchar{+}}09]{QuickCheckPulse}
Koen Claessen, Michal Palka, Nicholas Smallbone, John Hughes, Hans Svensson,
  Thomas Arts, and Ulf Wiger.
\newblock {Finding Race Conditions in Erlang with QuickCheck and PULSE}.
\newblock {\em SIGPLAN Not.}, 44(9):149--160, aug 2009.
\newblock \href {https://doi.org/10.1145/1631687.1596574}
  {\path{doi:10.1145/1631687.1596574}}.

\bibitem[dBKP92]{BoerKP92}
Frank~S. de~Boer, Jan~Willem Klop, and Catuscia Palamidessi.
\newblock Asynchronous communication in process algebra.
\newblock In {\em Proc. LICS}, pages 137--147. {IEEE} Computer Society, 1992.
\newblock \href {https://doi.org/10.1109/LICS.1992.185528}
  {\path{doi:10.1109/LICS.1992.185528}}.

\bibitem[DCMA17]{10.1007/s00165-017-0426-2}
Ramiro Demasi, Pablo~F. Castro, Thomas S.~E. Maibaum, and Nazareno Aguirre.
\newblock Simulation relations for fault-tolerance.
\newblock {\em Form. Asp. Comput.}, 29(6):1013--1050, nov 2017.
\newblock \href {https://doi.org/10.1007/s00165-017-0426-2}
  {\path{doi:10.1007/s00165-017-0426-2}}.

\bibitem[FGL{\etalchar{+}}96]{FournetGLMR96}
C{\'{e}}dric Fournet, Georges Gonthier, Jean{-}Jacques L{\'{e}}vy, Luc
  Maranget, and Didier R{\'{e}}my.
\newblock A calculus of mobile agents.
\newblock In {\em Proc. CONCUR}, volume 1119 of {\em LNCS}, pages 406--421.
  Springer, 1996.
\newblock \href {https://doi.org/10.1007/3-540-61604-7_67}
  {\path{doi:10.1007/3-540-61604-7_67}}.

\bibitem[FH07]{DBLP:journals/jlp/FrancalanzaH07}
Adrian Francalanza and Matthew Hennessy.
\newblock A theory for observational fault tolerance.
\newblock {\em JLAMP}, 73(1-2):22--50, 2007.
\newblock \href {https://doi.org/10.1007/11690634_2}
  {\path{doi:10.1007/11690634_2}}.

\bibitem[FH08]{DBLP:journals/iandc/FrancalanzaH08}
Adrian Francalanza and Matthew Hennessy.
\newblock A theory of system behaviour in the presence of node and link
  failure.
\newblock {\em Inf. Comput.}, 206(6):711--759, 2008.
\newblock \href {https://doi.org/10.1016/j.ic.2007.12.002}
  {\path{doi:10.1016/j.ic.2007.12.002}}.

\bibitem[FLMD19]{DBLP:journals/pacmpl/FowlerLMD19}
Simon Fowler, Sam Lindley, J.~Garrett Morris, and S{\'{a}}ra Decova.
\newblock Exceptional asynchronous session types: session types without tiers.
\newblock {\em Proc. {ACM} Program. Lang. (POPL)}, 3:1--29, 2019.
\newblock \href {https://doi.org/10.1145/3290341} {\path{doi:10.1145/3290341}}.

\bibitem[G{\"{a}}r99]{10.1145/311531.311532}
Felix~C. G{\"{a}}rtner.
\newblock Fundamentals of fault-tolerant distributed computing in asynchronous
  environments.
\newblock {\em ACM Comput. Surv.}, 31(1):1--26, 1999.
\newblock \href {https://doi.org/10.1145/311531.311532}
  {\path{doi:10.1145/311531.311532}}.

\bibitem[GS86]{GrafS86a}
Susanne Graf and Joseph Sifakis.
\newblock A modal characterization of observational congruence on finite terms
  of {CCS}.
\newblock {\em Inf. Control.}, 68(1-3):125--145, 1986.
\newblock \href {https://doi.org/10.1016/S0019-9958(86)80031-6}
  {\path{doi:10.1016/S0019-9958(86)80031-6}}.

\bibitem[GSS{\etalchar{+}}18]{slowdown}
Haryadi~S. Gunawi, Riza~O. Suminto, Russell Sears, Casey Golliher, Swaminathan
  Sundararaman, Xing Lin, Tim Emami, Weiguang Sheng, Nematollah Bidokhti,
  Caitie McCaffrey, Deepthi Srinivasan, Biswaranjan Panda, Andrew Baptist, Gary
  Grider, Parks~M. Fields, Kevin Harms, Robert~B. Ross, Andree Jacobson, Robert
  Ricci, Kirk Webb, Peter Alvaro, H.~Birali Runesha, Mingzhe Hao, and Huaicheng
  Li.
\newblock Fail-slow at scale: Evidence of hardware performance faults in large
  production systems.
\newblock {\em {ACM} Trans. Storage}, 14(3):23:1--23:26, 2018.
\newblock \href {https://doi.org/10.1145/3242086} {\path{doi:10.1145/3242086}}.

\bibitem[HGZ{\etalchar{+}}17]{achilles}
Peng Huang, Chuanxiong Guo, Lidong Zhou, Jacob~R. Lorch, Yingnong Dang, Murali
  Chintalapati, and Randolph Yao.
\newblock Gray failure: The achilles' heel of cloud-scale systems.
\newblock In {\em Proc. HotOS}, pages 150--155, New York, NY, USA, 2017.
  Association for Computing Machinery.
\newblock \href {https://doi.org/10.1145/3102980.3103005}
  {\path{doi:10.1145/3102980.3103005}}.

\bibitem[HNY{\etalchar{+}}13]{10.1007/978-3-642-40787-1_8}
Raymond Hu, Rumyana Neykova, Nobuko Yoshida, Romain Demangeon, and Kohei Honda.
\newblock Practical interruptible conversations.
\newblock In {\em Proc. {RV}}, pages 130--148. Springer, 2013.
\newblock \href {https://doi.org/{10.1007/978-3-642-40787-1_8}}
  {\path{doi:{10.1007/978-3-642-40787-1_8}}}.

\bibitem[HR95]{TPL}
Matthew Hennessy and Tim Regan.
\newblock A process algebra for timed systems.
\newblock {\em Inf. Comput.}, 117(2):221--239, 1995.
\newblock \href {https://doi.org/10.1006/inco.1995.1041}
  {\path{doi:10.1006/inco.1995.1041}}.

\bibitem[HYC16]{HondaYC16}
Kohei Honda, Nobuko Yoshida, and Marco Carbone.
\newblock Multiparty asynchronous session types.
\newblock {\em J. {ACM}}, 63(1):9:1--9:67, 2016.
\newblock \href {https://doi.org/10.1145/2827695} {\path{doi:10.1145/2827695}}.

\bibitem[LHS20]{partial}
Chang Lou, Peng Huang, and Scott Smith.
\newblock Understanding, detecting and localizing partial failures in large
  system software.
\newblock In {\em Proc. {NDSI}}, pages 559--574. {USENIX} Association, 2020.
\newblock URL: \url{https://www.usenix.org/conference/nsdi20/presentation/lou}.

\bibitem[LNPV18]{LaneseNPV18}
Ivan Lanese, Naoki Nishida, Adri{\'{a}}n Palacios, and Germ{\'{a}}n Vidal.
\newblock A theory of reversibility for {Erlang}.
\newblock {\em JLAMP}, 100:71--97, 2018.
\newblock \href {https://doi.org/10.1016/j.jlamp.2018.06.004}
  {\path{doi:10.1016/j.jlamp.2018.06.004}}.

\bibitem[LP11]{DBLP:conf/wsfm/LopezP11}
Hugo~A. L{\'{o}}pez and Jorge~A. P{\'{e}}rez.
\newblock Time and exceptional behavior in multiparty structured interactions.
\newblock In {\em Proc. {WS-FM}}, volume 7176 of {\em LNCS}, pages 48--63.
  Springer, 2011.
\newblock \href {https://doi.org/10.1007/978-3-642-29834-9_5}
  {\path{doi:10.1007/978-3-642-29834-9_5}}.

\bibitem[LSZ19]{LaneseSZ19}
Ivan Lanese, Davide Sangiorgi, and Gianluigi Zavattaro.
\newblock Playing with bisimulation in erlang.
\newblock In {\em Models, Languages, and Tools for Concurrent and Distributed
  Programming: Essays Dedicated to Rocco De Nicola on the Occasion of His 65th
  Birthday}, volume 11665 of {\em LNCS}, pages 71--91. Springer International
  Publishing, 2019.
\newblock \href {https://doi.org/10.1007/978-3-030-21485-2_6}
  {\path{doi:10.1007/978-3-030-21485-2_6}}.

\bibitem[LZ05]{DBLP:conf/fossacs/LaneveZ05}
Cosimo Laneve and Gianluigi Zavattaro.
\newblock Foundations of web transactions.
\newblock In {\em Proc. FoSSaCS}, volume 3441 of {\em LNCS}, pages 282--298.
  Springer, 2005.
\newblock \href {https://doi.org/10.1007/978-3-540-31982-5_18}
  {\path{doi:10.1007/978-3-540-31982-5_18}}.

\bibitem[Mil89]{Milner89}
Robin Milner.
\newblock {\em Communication and concurrency}.
\newblock {PHI} Series in computer science. Prentice Hall, 1989.
\newblock \href {https://doi.org/10.5555/534666} {\path{doi:10.5555/534666}}.

\bibitem[MPW92]{MilnerPW92a}
Robin Milner, Joachim Parrow, and David Walker.
\newblock A calculus of mobile processes, {I}.
\newblock {\em Inf. Comput.}, 100(1):1--40, 1992.
\newblock \href {https://doi.org/10.1016/0890-5401(92)90008-4}
  {\path{doi:10.1016/0890-5401(92)90008-4}}.

\bibitem[MS92]{MilnerS92}
Robin Milner and Davide Sangiorgi.
\newblock Barbed bisimulation.
\newblock In {\em Proc. ICALP}, volume 623 of {\em LNCS}, pages 685--695.
  Springer, 1992.
\newblock \href {https://doi.org/10.1007/3-540-55719-9_114}
  {\path{doi:10.1007/3-540-55719-9_114}}.

\bibitem[MS98]{MerroS98}
Massimo Merro and Davide Sangiorgi.
\newblock On asynchrony in name-passing calculi.
\newblock In {\em Proc. ICALP}, volume 1443 of {\em LNCS}, pages 856--867.
  Springer, 1998.
\newblock \href {https://doi.org/10.1007/BFb0055108}
  {\path{doi:10.1007/BFb0055108}}.

\bibitem[Mur19]{MURGIA201938}
Maurizio Murgia.
\newblock Input urgent semantics for asynchronous timed session types.
\newblock {\em JLAMP}, 107:38--53, 2019.
\newblock \href {https://doi.org/10.1016/j.jlamp.2019.04.001}
  {\path{doi:10.1016/j.jlamp.2019.04.001}}.

\bibitem[MV11]{MostrousV11}
Dimitris Mostrous and Vasco~T. Vasconcelos.
\newblock Session typing for a {F}eatherweight {E}rlang.
\newblock In {\em Proc. {COORDINATION}}, volume 6721 of {\em LNCS}, pages
  95--109. Springer, 2011.
\newblock \href {https://doi.org/10.1007/978-3-642-21464-6_7}
  {\path{doi:10.1007/978-3-642-21464-6_7}}.

\bibitem[Nyg18]{nygard2018release}
Michael~T Nygard.
\newblock {\em Release it!: design and deploy production-ready software}.
\newblock Pragmatic Bookshelf, 2018.

\bibitem[RH97]{ICALP-1997-RielyH}
James Riely and Matthew Hennessy.
\newblock {Distributed Processes and Location Failures}.
\newblock In {\em Proc. {ICALP}}, volume 1256 of {\em {LNCS}}, pages 471--481.
  {Springer}, 1997.
\newblock \href {https://doi.org/10.1007/3-540-63165-8_203}
  {\path{doi:10.1007/3-540-63165-8_203}}.

\bibitem[RH01]{RielyH01}
James Riely and Matthew Hennessy.
\newblock Distributed processes and location failures.
\newblock {\em Theor. Comput. Sci.}, 266(1-2):693--735, 2001.
\newblock \href {https://doi.org/10.1016/S0304-3975(00)00326-1}
  {\path{doi:10.1016/S0304-3975(00)00326-1}}.

\bibitem[SFE10]{SvenssonFE10}
Hans Svensson, Lars{-}{\AA}ke Fredlund, and Clara~Benac Earle.
\newblock A unified semantics for future {Erlang}.
\newblock In {\em Proc. {ACM} {SIGPLAN} workshop on Erlang}, pages 23--32.
  {ACM}, 2010.
\newblock \href {https://doi.org/10.1145/1863509.1863514}
  {\path{doi:10.1145/1863509.1863514}}.

\bibitem[SK{\"{O}}{\etalchar{+}}15]{SABAHIKAVIANI201585}
Zeynab Sabahi{-}Kaviani, Ramtin Khosravi, Peter~Csaba {\"{O}}lveczky, Ehsan
  Khamespanah, and Marjan Sirjani.
\newblock Formal semantics and efficient analysis of timed rebeca in real-time
  maude.
\newblock {\em Sci. Comput. Program.}, 113:85--118, 2015.
\newblock \href {https://doi.org/10.1016/j.scico.2015.07.003}
  {\path{doi:10.1016/j.scico.2015.07.003}}.

\bibitem[Ste89]{Steffen89}
Bernhard Steffen.
\newblock Characteristic formulae.
\newblock In {\em Proc. ICALP}, volume 372 of {\em LNCS}, pages 723--732.
  Springer, 1989.
\newblock \href {https://doi.org/10.1007/BFb0035794}
  {\path{doi:10.1007/BFb0035794}}.

\bibitem[SW01]{SW01}
Davide Sangiorgi and David Walker.
\newblock {\em The $\pi$-calculus: a Theory of Mobile Processes.}
\newblock Cambridge University Press, 2001.

\bibitem[Wya13]{wyatt2013akka}
Derek Wyatt.
\newblock {\em Akka Concurrency}.
\newblock Artima Incorporation, Sunnyvale, CA, USA, 2013.

\end{thebibliography}
\end{document}